\newtheorem{theorem}{Theorem}
\newtheorem*{theorem*}{Theorem}
\newtheorem{claim}{Claim}
\newtheorem*{claim*}{Claim}
\newtheorem{definition}{Definition}
\newcommand{\objectives}{\mathcal{F}}
\newcommand{\system}{\mathcal{S}}
\newcommand{\eval}{Eval}
\newcommand{\verify}{Verify}
\newcommand{\setup}{Setup}
\newcommand{\comp}[1]{C_{#1}}
\newcommand{\boldheader}[1]{\vskip 5pt \noindent{\bf #1}}
\begin{document}

\title{Proofs of Useless Work \\ Positive and Negative Results for Wasteless Mining Systems}

\author{Maya Dotan}
\email{MayaDotan@mail.huji.ac.il}

\author{Saar Tochner}
\email{saart@cs.huji.ac.il}
\affiliation{%
  \institution{The Hebrew University of Jerusalem}
  \country{Israel}
  }

\begin{abstract}
Many blockchain systems today, including Bitcoin, rely on Proof of Work (PoW). Proof of work is crucial to the liveness and security of cryptocurrencies. The assumption when using PoW is that a lot of trial and error is required on average before a valid block is generated. One of the main concerns raised with regard to this kind of system is the inherent need to ``waste'' energy on ``meaningless'' problems. In fact, the Bitcoin system is believed to consume more electricity than several small countries.

In this work we formally define three properties that are necessary for wasteless PoW systems: (1) solve ``meaningful" problems (2) solve them efficiently and (3) be secure against double-spend attacks. These properties aim to create an open market for problem-solving, in which miners produce solutions to problems in the most efficient way (wasteless). The security of the system stems from the economical incentive created by the demand for solutions to these problems.

We analyze these properties, and deduce constraints that must apply to such PoW systems. In our main result, we conclude that under realistic assumptions, the set of allowed problems must be preimage resistant functions in order to keep the system secure and efficient.

\end{abstract}

\begin{titlepage}
    \maketitle
\end{titlepage}

\section{Introduction}
Cryptocurrencies (such as Bitcoin~\cite{nakamoto2008bitcoin}) are distributed (and often decentralized) currencies. Bitcoin operates on top of the Blockchain in which each block encapsulates monetary transactions. 
A transaction is valid only upon being included in a block. Security in Bitcoin translates to ensuring that the Blockchain is constantly appended, and it is appended in the same way across all users in the system (consistency). Appending the blockchain is done through a process called "Mining", and block creators are called "Miners". 
It is of vital importance to the health of the protocol that the rate of blocks created is regulated, and that it is not controlled by an adversary. 
The most popular method for regulating block creation is through "Proof of Work", where miners must perform a sufficient amount of "computational work" (e.g. solving a cryptographic puzzle) in order to create a block. This implies that with high probability, miners can only create blocks at a rate which is proportional to their computational power in the network. From this property stems the security guarantee of Bitcoin - As long as no single user controls a majority of the computational power in the network, then the probability of inconsistency across users decreases exponentially with the number of blocks created. In this sense, proof of work is what ensures that the Bitcoin system is secure. 

The mining process introduces a serious environmental problem due to its massive energy consumption. The energy consumption of Bitcoin is estimated to be at least as high as that of some small countries \cite{o2014bitcoin,de2018bitcoin}.
In this paper we argue that a computation is not wasteless if someone is willing to pay for its solutions in some external setting. Papers such as \cite{zhang2017rem,zheng2020axechain,oliver2017proposal} followed the same approach, and designed systems that enable users to request problems that they need to solve, and change the mining process to solve these problems. The main contribution of this paper is the modeling and analysis of the economical market that is created by such mining systems. We model users that upload problems as the consumers -- they ask for solutions to their computational problems and are willing to pay a fee. Miners are the producers of goods. They invest energy in order to produce solutions to problems and collect a fee. Their profit is a combination of this fee and the block reward. we connect between the environmental problem and economic waste. We show that mining systems in which not all energy goes to producing solutions to problems are wasteful also in the profit margin of the producers. We therefore limit our discussion to systems that are ``energy efficient'' which we will define formally and analyze in the main part of this work.

Finally, any wasteless mining system must still remain secure according to the standard notions of security in the Blockchain world today.
Our economical modeling exposes a new challenge in this regard that has not been addressed in previous works: How to incentivize miners to invest their computational power to solve users problems inside the mining system instead of in any other external settings. 
This is crucial to the liveness and security of the system since blockchain systems' security increases with the volume of work that goes into the proof of work process. If miners prefer to invest their computational power in an external setting the security of the system becomes compromised.

Combining all of the above we have that our discussion of ``useful work'' should be limited to mining systems that uphold all three properties (1) Meaningful (User Uploaded), (2) Energy-Efficient and (3) Secure.

\subsection{Related Work}
"Proof of Stake" and "Proof of Space", which studied in \cite{bentov2014proof}, \cite{gilad2017algorand}, \cite{dziembowski2015proofs}, \cite{kiayias2017ouroboros} and more, replace the energy with a different resource. While these avoid energy waste, they incur waste in other domains. We therefore consider these approaches to be only as partial solutions.

We focus on an approach first presented in \cite{king2013primecoin}, and again in \cite{ball2017proofs}. They introduce the notion of "Proof of Useful Work". In these systems, the outputs of the "mining computation" are supposed to be meaningful. Both of these works however do not allow users to upload their own problems (the problems are dictated by the system). We claim that in order to make such systems favorable, users must be allowed to upload computational tasks that have some value to them. This creates a competitive market, which provides incentives for miners to participate in the solution of computational problems for a profit.

A step forward in implementing useful proofs of work with user-uploaded problem has been done in REM~\cite{zhang2017rem}, however it is strongly based on 2 facts: (i) The hardware enforces correct reporting of work, and (ii) The assumption that all miners use this specific hardware. This can be viewed as a special case of the general solution we describe, where a trusted setup can verify the complexity on the computation. We elaborate on this in appendix~\ref{app::REMspecialCase}.

More papers were published in this field, however none of them meet all three properties. These include:  \cite{oliver2017proposal}, \cite{zheng2020axechain} miners' contradicting incentives can cause waste, \cite{daian2017short} solve the case for restricted, non-user-uploaded problems and \cite{ball2017proofs}.




\subsection{Our Contribution}

In this paper we formally define the notion of "wasted energy" - energy is "wasted" if no one is willing to pay for the result of the computation. Thus, a ``meaningful'' problem is one that a user in the system is willing to pay for.

We look at the trade-off between solving meaningful problems, reducing marginal computation work, and keeping the system secure.
Formally, we define three desired properties for a ``non-wasteful'' proof of work system:
(1) Meaningful Problems - The results of computations performed by miners should be of interest. Interest is measured by economic incentive - the user must be willing to pay for the result of the computation. A simple economic argument shows that this reduces to "User Uploaded" problems. 
(2) Energy Efficiency - The algorithms used to solve the problems are optimal. This again makes sure that there is no ``waste'' in the mining process as compared to an external setting.
(3) Security - the system should by secure against double spend attacks by a minority attacker with overwhelming probability. In particular our definition coincides with the common prefix property as defined in \cite{garay2015bitcoin}


\begin{figure}
	\centering
	\includegraphics[scale=0.30]{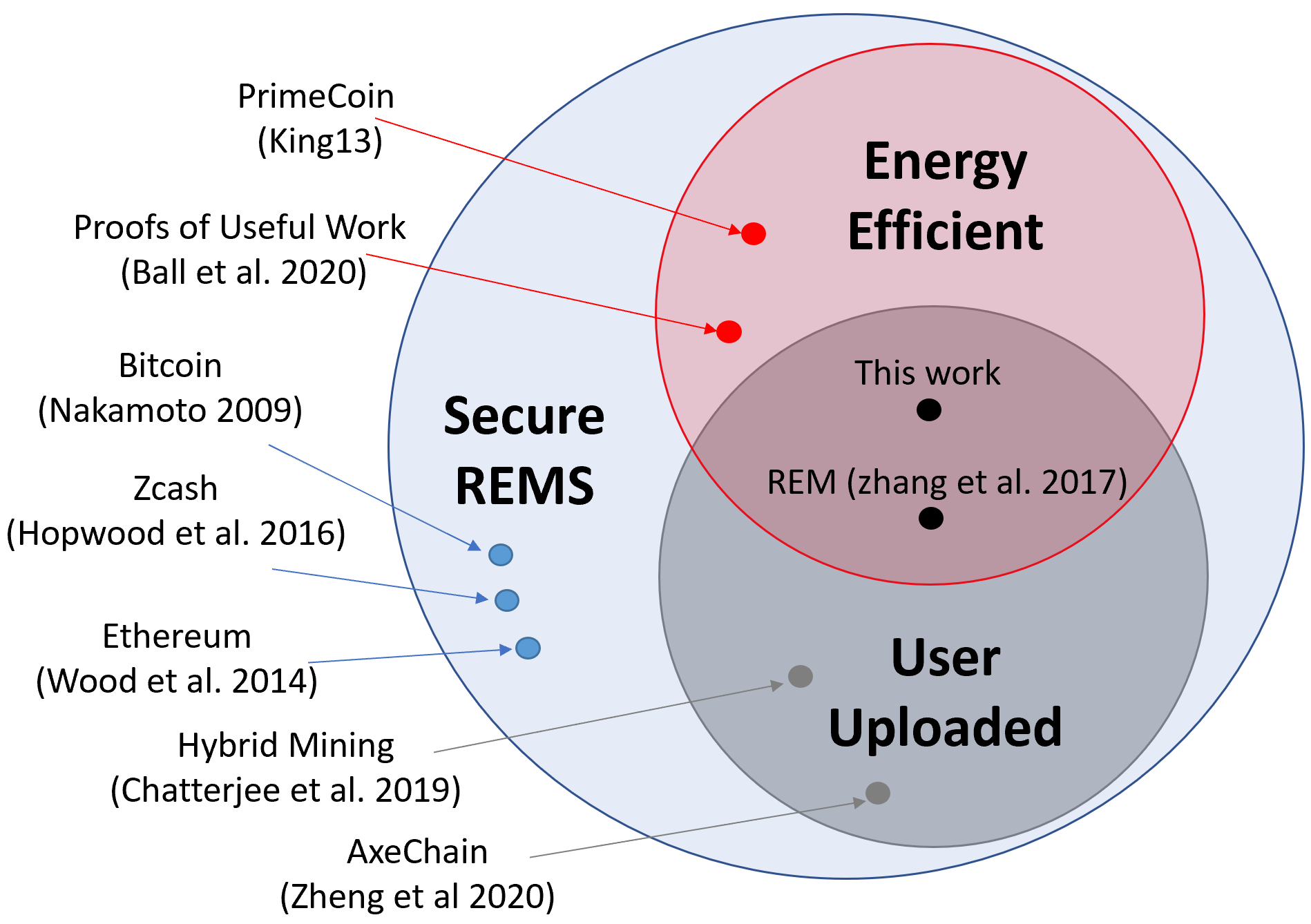}
	\caption{Intersections of the three requirements compared to state of the art systems today} 
\end{figure}

In high level, We prove the following necessary conditions:
(i) Solving user-uploaded problems must be an integral and mandatory part of the mining process. 
(ii) Miners must supply a proof that they attempted to solve user-uploaded problems. These proofs need to be easy to verify and "unfakeable" given any prior knowledge on the problems or the solutions. 
(iii) Anyone with computational resources should be economically incentivized to solve a problem within the mining process rather than offline. 
(iv) The system should be able to evaluate the computational requirements for solving a mining objective. 
(v) Following the previous point, unless there is trusted reporting of the amount of computational work done by a user, the system should only allow users to upload problems for which every case is the "worse case" (For example, inverting one-way functions).

\subsection{Structure}
In Section~\ref{sec::pricingObjectives} we relate the properties of meaningfulness and energy-efficiency to market powers and economics.
In section~\ref{sec::Model} we formally define the model, section~\ref{sec::modelAnalysis} formal analyzes necessary properties in systems that has our three properties. We fully characterize the family of possible mining systems. 
In section~\ref{sec::conclusions} we discuss some future directions. 
In addition, in the appendix there is an example implementation of a mining system which meets all $3$ criteria, discussion on current solutions, and all the omitted proofs.

\section{Mining as a Free Market For Problem Solving}\label{sec::pricingObjectives}

Throughout this paper we discuss the possibility of users uploading computational puzzles to which they need solution, alongside a fee. A miner who solves the mining objective collects the fee. 
We would like to discuss why a fee is in fact necessary for the liveness of the system. 
Assume for a moment an REMS that is (1) Secure (2) Energy Efficient and (3) User Uploaded, and users upload mining objectives without a fee (or with a negligible fee). We claim that if the amount $\textrm{Block Reward} + \textrm{mining objective Fees} + \textrm{Transaction Fees}$ is less than the amount $\textrm{Energy Needed to Mine} + \textrm{Value of solutions in outside market}$, miners will not be incentivised to mine, and would prefer to solve mining objectives in an offline setting. This is of course harmful to the security of the system (since less miners implies less security), and to the liveness of the system. It is therefore essential to make sure that introducing the option to upload mining objectives to a mining system does makes economic sense - it should be financially appealing to upload mining objectives for a fee.
We argue that users will in fact be incentivised to solve their mining objectives through the REMS. This is because the price for a solution within the REMS is lower the any outer market. This is the result of the REMS miners that are incentivised by block rewards and transaction fees in addition to the fee for solving mining objectives.

The security and energy-efficiency requirements also have to do with market powers.
Just like a factory would aim to minimize its costs by only paying the essential minimal for raw material and operational cost in order to be profit maximizing, in any REMS that makes economical sense, the price in terms of resources to produce a solution to a mining objective should be minimized. If this is not the case, it would be cheaper to solve the problem in an exterior setting, which is detrimental to the REMS. This in turn means that the algorithms for solving mining objectives should be optimal, and that the overhead spent by the system on anything other than solving mining objectives should be minimized (``energy-efficiency'').

This connection between security and economic efficiency in system design exists in many aspects of cryptocurrency systems. There have been several attacks on systems such as Bitcoin using this type of economical reasoning, One famous attack is the Selfish Mining attack \cite{eyal2014majority} which leads to a waste of energy caused by faulty behaviour of profit-maximizing miners. These however are not usually framed in the economical setting.

\section{Formal Model for Mining Systems}\label{sec::Model}
In this Section we formalize the properties that any wasteless mining system must uphold. 

\subsection{Definitions}
Throughout this section we will discuss strings in $\{0,1\}^{256}$, but the same would hold for $\{0,1\}^D$ for any $D\in\mathbb{N}$.

\begin{definition}[Mining
Objective]\label{def::MiningObjective}
    Let $D\in\mathbb{N}$.
    A \textit{mining objective} is a pair $\left<f,y\right>$ where $f$ is a function and $y\in Im(f)$. We will say that a mining objective $\left<f,y\right>$ was solved if some $x \in \{0,1\}^{D}$ was found such that $f(x) = y$. 
\end{definition}

We next define the entities in the system:

\begin{definition}[Users and Miners]
    There are two types of users in the system:
    \begin{enumerate}
    \item Users - This is the basic player in the system. They can upload mining objectives to $\objectives$ for a fee, and participate in the verification process - they run $\verify$
    \item Miners -  These are users that in addition to the above also participate in block creation (run $\eval$).
    \end{enumerate}
    
\end{definition}

\begin{definition}[Preimage Resistance Function]\label{def::preimageResistanceFunction}
    A preimage resistance function is a function that is easy to compute, but whose inverse is difficult to compute. 
    More precisely, a preimage resistance function $f$ holds the property that for every adversary algorithm $\mathcal{A}$ that runs in polynomial time in $size(x)$, $Pr[f(\mathcal{A}(f(x))) = f(x)]$ is negligible. 
     \cite{lamport1979constructing,rogaway2004cryptographic}
\end{definition}

Notice that this means that miners may upload mining objectives to $\objectives$. In particular, they may upload problems to which they already know the solution, or mining objectives that they have an advantage in solving.

\begin{definition} [REMS] \label{def::REMS}
A \textbf{Repeated Eval Mining System} $\system = ( \objectives,\setup, \eval, \verify)$ is a quadruple defined as follows:
\begin{itemize}
    \item $\objectives$ is a set of mining objectives.
    \item $\setup() \rightarrow ek$ is a randomized, polynomial time algorithm that takes no parameters, and returns an evaluation key $ek$, which is unpredictable to all users. 
    \item $\eval(\objectives, ek, x) \rightarrow (y, \pi)$ is a polynomial time algorithm which takes an input $x \in \{0, 1\}^*$ and produces an output $y \in Y$. Where $Y=\{\mathbb{T}, \mathbb{F} \} \times \{\mathbb{T}, \mathbb{F}\}^{\objectives}$. The first coordinate of $y\in Y$ is an indication of whether the seed $x$ results in a successful mine (a block was created), and the rest is an indicator vector of which mining objectives were solved by $x$. $\pi \in \{0, 1\}^*$ is a (possibly empty) proof $\pi$. 
    \item $\verify(\objectives, ek, x, y, \pi)\rightarrow \{ \mathbb{T}, \mathbb{F} \}$ is a deterministic polynomial time algorithm which returns $\mathbb{T}$ if $y,\pi$ is a valid output of $\eval(\objectives, ek, x)$, and $\mathbb{F}$ otherwise. 
\end{itemize}

In an REMS, repeated calls to $\eval$ are made, and when a query returns $y[0]=\mathbb{T}$ then a mining attempt was successful, and we will refer to this as a \textbf{new block was mined}.
\end{definition}

For example, in today's Bitcoin, $ek$ is the hash of the previous block, $x$ is some string, and there exists a parameter $D_B$, such that $\objectives = \{\left< \text{SHA}256\left(ek\circ x\right), y \right>| \forall y \text{ s.t } y\le D_B\}$. $\eval(\objectives, ek,x)$ executes SHA$256$ on $ek \circ x$,  gets an output $\hat{y}$, and returns the vector $y$ and proof $\pi = x$ where: $\forall \left<f,y\right> \in \objectives$ it holds that $y_{\left<f,y\right>}=\mathbb{T}$ i.f.f $\hat{y} == y$, and $y[0]=\mathbb{T}$ i.f.f. one of the objectives was solved.
$\verify$ returns $\mathbb{T}$ i.f.f the proof (in this case $\pi = x$) maintains that $\text{SHA}256(ek\circ \pi) \le D_B$.

\vspace{\baselineskip}
Necessary conditions that should hold for any REMS system are (1) Correctness - $\verify$ accepts proofs that were generated by $\eval$ and (2) Soundness -  the probability to generate such that $\verify$ will accept is negligible (regardless off the energy efficiency question). Formally:

\begin{definition}[Correctness]\label{def::correctness}
    We will say that an REMS $\system = (\objectives,\setup, \eval, \verify)$ is correct if $\eval(\objectives, ek,x) = (y, \pi)$ $\Longrightarrow verify(ek, x, y, \pi) = \mathbb{T}$. 
\end{definition}

\begin{definition}[Soundness]\label{def::soundeness}
\label{definition::soundness}
    We will say that an REMS $\system = ( \objectives,\setup, \eval, \verify)$ is sound if for every algorithm $A$ that runs in time $O(poly(ek))$,
    and $ek$ which is sampled uniformly from $\{0,1\}^{D}$: If the output of $A(ek)$ is $(x_{ek}, y_{ek}, \pi_{ek})$ then
    \begin{align*}
       Pr [ verify(ek, x_{ek}, y_{ek}, \pi_{ek}) = \mathbb{T} 
                 \wedge (y_{ek}, \pi_{ek}) \neq \eval(ek,x_{ek}) ] \le 2^{-|ek|}
    \end{align*}
    
\end{definition}

From now on, we will only discuss REMSs which are both Sound and Correct.

Throughout the following sections, for the sake of ease of notations, assume that for every $x$ the task of evaluating $f(x)$ takes the same amount of computational resources for every $f\in \objectives$. 
Moreover, we assume from now on that $\eval$ consumes a constant amount of computational  resources per execution. We will address the general case and discuss why these assumptions is not necessary in Section~\ref{sec::conclusions}.

\begin{definition}[Computational Resources Demands]\label{def::Df_Cf}
    Let $\system = ( \objectives,\setup, \eval, \verify)$ be an REMS. Let $\left<f,y\right>$ be a mining objective in $\objectives$.

    \begin{enumerate}
        \item We denote $D_f$ the computational resources necessary to compute the value $f(x)$ for any $x\in\{0,1\}^{D}$, when using optimal algorithm for computing $f(x)$.
        \item Denote $\comp{\left<f,y\right>}$ the computational resources necessary, in expectation, to find a solution $x\in\{0,1\}^{D}$ such that $f\left(x\right)=y$, where $x$ is sampled uniformly from $\{0,1\}^{D}$. Again, the computation assumed the optimal algorithm for computing $f(x)$.
    \end{enumerate}
\end{definition}

Finally, we define the blockchain data-structure over an REMS $\system$:
\begin{definition}[Blockchain] \label{def::blockchain}
    Let $\system = ( \objectives,\setup, \eval, \verify)$ be an REMS.
    A linked list of blocks $B_1, \cdots, B_n$ is called a blockchain if for every block $B = (ek_B, x_B, y_B, \pi_B)$ in the blockchain, it holds that $\verify(\objectives, ek_B, x_B, y_B, \pi_B) = \mathbb{T}$. Moreover denote: $B_1$ is called the genesis block, $B_n$ is the head, and $n$ is the weight of the blockchain.
\end{definition}

Note that in the above definition, we assume that every block for which $\verify$ returns $\mathbb{T}$ is a block with the same weight as all other blocks. This notion can in theory be generalized, however we will not go into this case in this work.

\subsection{Security, Energy Efficiency, Meaningfulness}
We first define security in REMSs. Our definitions use notations inspired by  \cite{boneh2018verifiable}, and the definitions coincide with \cite{garay2015bitcoin}. Our definitions of security still are with respect to common prefix property and the chain quality property, and require an honest majority. 

\boldheader{Threat Model}
The attacker is assumed to be able to produce as many identities as desired and can alter the code executed by the users under its control. In addition we assume that any user may be a miner (active in the voting procedure).
We assume that the attacker can first observe the activity of the honest users and be the last user to decide on a strategy. We assume that the attacker does not control a majority of the computational power in the network, that it is computationally bounded and that regular cryptographic assumptions hold.

The attacker goal is to mine more blocks than its relative computational power. To do so, it may upload as many mining objectives as it wants and may attempt to manipulate the choice of problems that it and other miners are trying to solve. It can not modify the logic that is executed be running $\eval, \verify$ by other users. 


\begin{definition}[Secure REMS] \label{def::Security} 

Let $\system = ( \objectives,\setup, \eval, \verify)$ be an REMS. 
Let $m$ be a miner in the system.
We say that the REMS $\system$  is \textbf{secure} if it holds that:
\begin{equation*}
    \begin{aligned}
        Pr \left[m\textrm{ finds some } x \textrm{ such that } \eval( \objectives, ek, x)_0 = \mathbb{T}  \right]= \left[\frac{\textrm{\# Executions of $\eval$ from $m$}}{\textrm{\# Executions of $\eval$ across the network}}\right]
    \end{aligned}
\end{equation*}
Where $m$ gets to choose the distribution over $\{0,1\}^{D}$ from which she samples $x$ (without knowing $ek$).
\end{definition}

The guarantee is that an attacker can not create blocks faster than its ratio of the total computational power in the network. This means that the "cryptopuzzle" should uphold the property that $$Pr\left(\textrm{User } m \textrm{ solves the puzzle}\right) = \frac{\textrm{Computational power of }m}{\textrm{Total computational power in the network}}$$
If this property does not hold we say that the system is vulnerable to double spend attacks by a minority attacker.

Note that this notion of security coincides with the notion of security in ``The Bitcoin Backbone Protocol" \cite{garay2015bitcoin}. As shown there, this is enough to ensure that the common prefix property and the chain quality property are maintained in the system.

\begin{definition}[Meaningful REMS]\label{def::meaningfulREMS}
    We say that a mining objective is meaningful if there exists a user willing to pay for the resources that are required to solve it regardless of the mining process (i.e. the user would also pay for a solution in an external setting). We say that an REMS $\system = ( \objectives,\setup, \eval, \verify)$ is meaningful if all of the mining objectives in $\objectives$ are meaningful. 
\end{definition}

Due to to Definition~\ref{def::meaningfulREMS} we will from now on use the terms ``meaningful'' and ``user-uploaded'' interchangeably.

\begin{definition}[Secure-User-Uploaded REMS]\label{def::UserUploadedSecurity}
Let $[M]$ be the set of all the users in a system (including all miners) . Let $\objectives$ be a set of mining objectives that was chosen by $[M]$ \footnote{In particular, a miner $m\in[M]$ may have any non trivial amount of information about any mining objective $\left<f,y\right>\in\objectives$, such as a solution $x$ for which $f(x)=y$.}.
We say that it is ``secure user-uploaded'' if $\system = ( \objectives,\setup, \eval, \verify)$ secure.
\end{definition}

Definition \ref{def::UserUploadedSecurity} is the formalization of the combination of the two conditions discussed in the introduction. Notice that the requirement that the mining objective was chosen by the miner (and not a user that is participating only through uploading problems) is necessary in order to ensure security. This is because we want any secure REMS to be resilient to miners maliciously uploading mining objectives in order to increase their chances of successfully mining a block. 

For the next part of our definitions we use the following intuition:
We say that a mining objective $\left< f, y \right> \in \objectives$ is meaningful if there is a user who is willing to pay for the computational resources that are needed in order to solve it. We would like to make sure that any system that meets our requirements will only allow for meaningful objectives to belong to $\objectives$. Since this is not yet well defined, we begin with the following softer definition of energy efficiency. Combining this notion with the fact that $\objectives$ is composed of user uploaded mining objectives, we can describe necessary conditions for systems in which all mining objectives are meaningful. In particular, an energy-efficient REMS that operates over user-uploaded mining objectives could be described as a market, with the users as consumers (consuming the solutions to the mining objectives) and miners as producers (invest energy in order to fulfill the consumers' demands).

\begin{definition}\label{def::EnergyEfficientREMS}[$\epsilon$-Energy Efficient REMS] 

Let $\epsilon>0$. We say that a mining system $\system = ( \objectives, \setup, \eval, \verify)$ is $\epsilon$-energy efficient if it holds that for every $x\in\{0,1\}^{D}$ the energy ratio: $$\frac{\Sigma_{\left<f,y\right>\in\objectives} D_f}{\textrm{computing } \eval\left(\objectives,ek,x\right) } > 1-\epsilon$$
That is, the system can make sure that the percent of energy used for solving user uploaded mining objectives is arbitrarily close to $1$.
\end{definition}

From now on, any time we say that a claim holds for an ``energy-efficient'' REMS, we mean that it holds for an $\epsilon$-energy-efficient REMS \textbf{for every $\epsilon\ge 0$}.

\section{Necessary Properties of REMS - Formal Analysis}\label{sec::modelAnalysis}
In this section we prove our main theorems about Repeated Evaluation Mining systems. We fully characterize the allowed set of functions that may belong to $\objectives$ that meet all 3 of the desired criteria: Secure, Energy Efficient and Meaningful.

The claims are organized in the following structure. First, we discuss secure REMSs and show two basic properties which we prove are necessary for any REMS to be secure: The function $\eval$ should be optimally efficient and the relationship between mining a block and solving a mining objective should be correlative to ``how hard" the objective is. 

Next, we refine the discussion to secure \& energy-efficient REMS, wherein we introduce the proof $\pi$ to our analysis. We initially show that ``hard-to-generate" proofs are mandatory in order to keep the system energy-efficient. Using the above basic properties in our setting we prove (i) $\eval$ should be optimally efficient in generating the proof, and (ii) the relationship between finding a coherent proof and solving a mining objective should be, as above, correlative to the resources demand of solving the objective.

In the last subsection we present our main theorems, which hold for secure, energy-efficient and user uploaded REMSs. We combine the claims, and deduce constraints on the allowed set of mining objectives in such systems.

\subsection{Secure REMS}\label{subsec::security}

\begin{claim}[$\eval$ is optimal]\label{claim::evalOptimal}
Let $\mathcal{S}=( \objectives, \setup, \eval, \verify)$ be a secure REMS.
Then there does not exist any algorithm $\eval'$ that is more efficient than $\eval$ such that $\mathcal{S'}=( \objectives, \setup, \eval', \verify)$ is sound. \footnote{This condition means that $\eval$ is the optimal algorithm for the computational task which mining is based on.}
\end{claim}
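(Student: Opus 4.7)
The plan is a proof by contradiction. Suppose there is some $\eval'$ with strictly smaller per-call cost $C' < C$ than $\eval$ (using the paper's standing constant-cost assumption), and suppose $\mathcal{S}' = (\objectives, \setup, \eval', \verify)$ is sound (and, by the paper's convention of discussing only sound and correct REMSs, also correct). I would then exhibit a minority adversary whose block-mining rate breaks Definition~\ref{def::Security}.

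Let $m$ be an adversary controlling fraction $\alpha$ of the network's total computational power, with all other nodes running $\eval$ honestly. The adversary's strategy is simply to run $\eval'$ on the same candidate pairs $(ek, x)$ that an honest miner would try. Per unit of computational resource, $m$ performs $\alpha/C'$ calls to $\eval'$, while the honest nodes collectively perform $(1-\alpha)/C$ calls to $\eval$. By the correctness of $\mathcal{S}'$, every output $(y,\pi)$ of $\eval'$ is accepted by $\verify$, so each $\eval'$ call that yields $y[0] = \mathbb{T}$ constitutes a legitimate block in $\mathcal{S}$ as well — the verification algorithm is shared between the two systems. Moreover, since soundness pins down $\verify$-accepting pairs uniquely, the per-call success probability $p$ of producing $y[0]=\mathbb{T}$ on random $ek$ must be the same (up to negligible soundness error) for $\eval$ and $\eval'$.

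Combining these, $m$'s expected block-mining rate is $p\alpha/C'$, while the network's total rate is $p\alpha/C' + p(1-\alpha)/C$. Because $1/C' > 1/C$, the fraction of blocks mined by $m$ strictly exceeds $\alpha$. But under the constant-cost assumption on $\eval$, the right-hand side of Definition~\ref{def::Security} — the ratio of $m$'s $\eval$ executions to total $\eval$ executions — is exactly the fraction of computational power that $m$ contributes, namely $\alpha$. So the probability that a newly mined block was produced by $m$ strictly exceeds this allowed share, violating the equality in Definition~\ref{def::Security} and contradicting the assumed security of $\mathcal{S}$.

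The main obstacle is reconciling Definition~\ref{def::Security}, which counts only executions of $\eval$, with an adversary that runs $\eval'$ and never calls $\eval$ at all. The bridge is that $\verify$ is shared across $\mathcal{S}$ and $\mathcal{S}'$, and the uniqueness of accepted pairs (forced by soundness) means $\eval$ and $\eval'$ produce indistinguishable block-mining opportunities; every $\eval'$ call must therefore be counted as an effective $\eval$ execution for the purposes of the security accounting, and once that identification is made the strict inequality $1/C' > 1/C$ immediately breaks the required equality.
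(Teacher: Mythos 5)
Your proof is correct and follows essentially the same route as the paper's: an adversary who swaps in the cheaper $\eval'$ performs more mining attempts per unit of computational resource, so her expected share of blocks strictly exceeds her share of computational power, contradicting Definition~\ref{def::Security}. The only difference is that you explicitly justify, via the shared $\verify$ together with soundness and correctness, why $\eval'$-calls count as valid mining attempts with the same per-call success probability as $\eval$ --- a bookkeeping step the paper's proof leaves implicit.
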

\begin{proof}
    Assume towards a contradiction that there exist $\eval'\neq \eval$ which is more efficient than $\eval$. Assume w.l.o,g that the execution of $\eval'$ is more efficient than that of $\eval$ by a factor of $\alpha> 1$
    Assume that an attacker uses $\eval'$ instead of $\eval$, while all other users use $\eval$.
    Then it holds that the portion of block awarded to the attacker in expectation is:
    
    $$\frac{\textrm{\# executions of attacker using }\eval'}{\textrm{\# of executions of }\eval + \textrm{\# executions of }\eval' } =$$
    $$\frac{\alpha\cdot\textrm{\# executions of attacker using }\eval}{\textrm{\# of executions of }\eval + \textrm{\# executions of }\eval' } >$$
    $$\frac{\textrm{\# executions of attacker if they used  }\eval}{\textrm{\# of executions of } \eval \textrm{ if everyone used }\eval}$$
    Which is an honest miners' probability of mining a block.
    So an attacker increases the speed at which it mines a block as compared to the honest network, which is a contradiction to the notion of security defined in~\ref{def::Security}.
\end{proof}


In the following claim, we formalize the following notion: 
In a secure REMS, solving each mining objective will result in successfully mining with a probability that is proportional to the resources demand of the objective.

\begin{claim}\label{claim::weightsOfMiningObjectives}
For any $\left<f',y'\right>\in\objectives$ it holds that:
$$Pr\left(x \textrm{ results in a block }| f(x) = y\right)  = Pr\left(x \textrm{ results in a block }| f'(x) = y'\right)\cdot \frac{\comp{\left<f,y\right>}}{\comp{\left<f',y'\right>}}$$

Where $\comp{\left<f,y\right>}$ is the expected amount of computational power required for Finding a solution $x$ that satisfies $\left<f,y\right>$.
\end{claim}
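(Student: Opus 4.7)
The claim states that the quantity $p_{\left<f,y\right>}/\comp{\left<f,y\right>}$ --- the expected number of blocks produced per unit of computational work when a miner concentrates on objective $\left<f,y\right>$ --- must be constant across $\objectives$. An equivalent, perhaps more suggestive, reformulation via Bayes' rule and the fact that $Pr[f(x)=y] = 1/\comp{\left<f,y\right>}$ for uniformly sampled $x$ is: each objective contributes exactly the same amount to the joint probability $Pr[x \textrm{ mines} \wedge f(x)=y]$. The plan is to prove the claim by contradiction: if one objective were ``more rewarding per unit work'' than another, a miner could exploit this to violate the security equation of Definition~\ref{def::Security}.

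First I would assume that two objectives $\left<f,y\right>, \left<f',y'\right> \in \objectives$ violate the stated ratio, say with $p_{\left<f,y\right>}/\comp{\left<f,y\right>} > p_{\left<f',y'\right>}/\comp{\left<f',y'\right>}$ strictly. Next I would construct an adversarial miner $m$ whose sampling strategy concentrates more effort on the ``more profitable'' objective $\left<f,y\right>$ than the uniform strategy does. The per-call probability of mining can be written as a weighted sum
\begin{equation*}
    \sum_{\left<g,z\right>\in\objectives} Pr_m[g(x)=z] \cdot p_{\left<g,z\right>},
\end{equation*}
where the weights $Pr_m[g(x)=z]$ depend on $m$'s chosen distribution over $\{0,1\}^{D}$. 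By tilting these weights toward $\left<f,y\right>$ and away from $\left<f',y'\right>$, the per-call mining probability strictly exceeds the value obtained under uniform sampling. Since the security definition requires the probability that a single $\eval$ call yields a block to be strategy-independent (otherwise $m$'s block share would not match her fraction of executions), we reach a contradiction.

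The main obstacle is executing the final step concretely: producing a strategy that demonstrably biases $Pr_m[f(x)=y]$ upwards without relying on knowledge of $ek$, which the miner does not have when choosing her distribution. One natural route --- consistent with Claim~\ref{claim::evalOptimal} --- is to work through an alternative algorithm $\eval'$ that skips the computation associated with the less profitable objective: either $\eval'$ is sound and strictly more efficient than $\eval$ on the reduced family, contradicting the optimality of $\eval$, or else the work performed inside $\eval$ can be redistributed into a strategy that sharpens the ratio inequality into a direct violation of the security identity. Handling this argument in a way that does not rely on $f$ being independent of $ek$ --- so that the claim indeed holds for every secure REMS --- is the delicate part; I anticipate invoking Claim~\ref{claim::evalOptimal} to rule out any ``shortcut'' that would let the miner bypass the less profitable objective, forcing the per-objective contributions to mining to coincide as the claim requires.
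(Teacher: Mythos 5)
Your overall strategy --- derive a contradiction with the security definition and the optimality of $\eval$ (Claim~\ref{claim::evalOptimal}) from a hypothetical imbalance between an objective's conditional block probability and its cost $\comp{\left<f,y\right>}$ --- is the same as the paper's. But you stop exactly where the proof has to be made: you concede that you cannot concretely construct the strategy that exploits the imbalance, because ``tilting'' the sampling distribution over $x$ toward $\{x : f(x)=y\}$ is not something a miner can do for free (it presupposes the ability to find such $x$'s, which is exactly what $\comp{\left<f,y\right>}$ measures, and it must be done without knowledge of $ek$), and the fallback you sketch (an $\eval'$ that skips the less profitable objective) is left as an anticipation rather than an argument. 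That unexecuted step is the entire content of the claim.

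The paper closes this gap with a concrete two-stage attacker whose cost accounting makes the imbalance pay off. Instead of running $\eval$ on fresh inputs, the attacker first spends her resources searching for solutions of the over-rewarded objective $\left<f,y\right>$ --- which by definition costs $\comp{\left<f,y\right>}$ per solution in expectation --- and then runs $\eval$ only on the $x$'s she has found, i.e.\ she filters her $\eval$ calls to inputs already satisfying $f(x)=y$. If $Pr\left(x \textrm{ results in a block} \mid f(x)=y\right)$ were disproportionately large relative to $\comp{\left<f,y\right>}$ (compared with the corresponding ratio for some other $\left<f',y'\right>$), this composite procedure would produce blocks at a strictly higher rate per unit of computation than honest repeated execution of $\eval$, i.e.\ it would be a more efficient mining algorithm than $\eval$, contradicting Claim~\ref{claim::evalOptimal}. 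The crucial point, absent from your tilted-distribution sketch, is that the cost of biasing toward $f(x)=y$ is not assumed away but explicitly charged at $\comp{\left<f,y\right>}$; that is precisely why the disproportionality hypothesis, and nothing weaker, yields the contradiction. To repair your write-up you would need to supply this (or an equivalent) explicit attacker together with its cost-versus-block-rate accounting, rather than asserting that the weights can be tilted.
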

 
The proof for Claim~\ref{claim::weightsOfMiningObjectives} appears in Appendix~\ref{app::Security}. The proof's idea is that an attacker with prior knowledge on the ``easier" mining objectives will tend to focus on solving them, unlike the honest miners. This will give the attacker an unfair advantage which contradicts the security of $\mathcal{S}$.


Restating Claim~\ref{claim::weightsOfMiningObjectives} in different words results in the very harsh requirement that $\comp{ \left< f, y \right> }$ must be known to the system (or, at least, relative to all other mining objectives in the system).

If this were not true, the system would have to be able to assess the amounts $\comp{ \left< f, y \right> }$ for every mining objective $\left<f,y\right>$ that is uploaded to the system by a user. This, in general, is a computationally infeasible (undecidable) task. 


\subsection{Secure \& Energy-Efficient REMS} \label{subsec::energyEfficient}

We now turn our attention to analyzing the energy efficiency requirement. Namely we show that the energy efficiency property implies that the system should not use additional energy resources for anything other than the task of solving mining objectives, up to a negligible amount (dictated by $\epsilon$). 

In the following claim we introduce proofs (of work) as a tool for enforcing that miners indeed solve mining objectives. This is important since it formally captures the following intuition: Miners must always be incentivised to solve the mining objective, rather than generating proofs in some way that is external to the system.

\begin{claim}[Proofs are necessary]\label{claim::energyEfficientProofsNecessarry}
    Let $\mathcal{S}=( \objectives, \setup, \eval, \verify)$ be an energy efficient REMS. 
    Then miners must supply proofs of attempting to solve mining objectives from $\objectives$ as part of the mining process. 
    Furthermore, the computational difficulty of computing $\eval$ must be less than the difficulty of 
    finding a proof $\pi$ for which $\verify$ evaluates to $\mathbb{T}$
\end{claim}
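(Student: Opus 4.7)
The plan is to prove both parts of the claim by contradiction, leveraging the energy-efficiency definition (Definition~\ref{def::EnergyEfficientREMS}) which demands that an arbitrarily large fraction of the energy consumed per call to $\eval$ be attributable to evaluating the $f$'s of mining objectives $\left<f,y\right>\in\objectives$. The overall strategy is to exhibit, under each negated hypothesis, a rational miner strategy that produces valid blocks while expending asymptotically zero energy on the mining objectives themselves; such a strategy drives the energy ratio $\frac{\Sigma_{\left<f,y\right>\in\objectives} D_f}{\text{cost of }\eval(\objectives,ek,x)}$ below $1-\epsilon$ for sufficiently small $\epsilon$, contradicting energy-efficiency.

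For the first part, I would suppose that no proof of attempting to solve mining objectives is required as part of the mining process; formally, that $\verify$ does not depend on (or can accept inputs where $\pi$ carries no evidence of) having computed the values $f(x)$ for $\left<f,y\right>\in\objectives$. A rational miner can then run a stripped-down evaluator $\eval'$ that skips all computations of $f$ and only produces the minimal syntactic artefact that $\verify$ accepts. By soundness and Claim~\ref{claim::evalOptimal} (applied to the behaviour required by $\verify$), $\eval'$ still yields acceptable outputs. The total cost of mining for such a miner is then entirely disjoint from $\Sigma_{\left<f,y\right>\in\objectives} D_f$, so the energy ratio becomes $0$, contradicting $\epsilon$-energy-efficiency for any $\epsilon<1$. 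I would also note that the same miner has a strict economic incentive to do this (the work on objectives is positive and the block reward is unchanged), so honest-majority assumptions do not rescue the system.

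For the second part, assume toward contradiction that finding some $\pi$ such that $\verify(\objectives, ek, x, y, \pi)=\mathbb{T}$ (for some $x,y$) is computationally cheaper than a single execution of $\eval(\objectives, ek, x)$. Then a rational miner can replace each call to $\eval$ by the cheaper proof-finding procedure, obtaining valid blocks while spending strictly less than the cost of $\eval$. Since $\eval$ is the only prescribed way to produce proofs that involve evaluating the $f$'s, bypassing it means that none of the miner's energy is spent on $\Sigma_{\left<f,y\right>\in\objectives} D_f$, again forcing the energy ratio below $1-\epsilon$ for small enough $\epsilon$. Combining these two arguments yields both conclusions of the claim.

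The main obstacle I anticipate is formalising the step ``a cheaper alternative strategy exists implies a miner will adopt it.'' The REMS model fixes $\eval$ syntactically, whereas the energy-efficiency definition talks about the intrinsic computational resources used. I would bridge this by appealing to the same optimality reasoning used in Claim~\ref{claim::evalOptimal}: if any alternative producing a $\verify$-accepting $(y,\pi)$ is cheaper, one can redefine $\eval$ around it without losing soundness or correctness, so the original $\eval$ was not compatible with energy-efficiency in the first place. A secondary subtlety is carefully distinguishing the per-query cost of $\eval$ from the aggregate cost over many queries needed to mine a block; the claim is about the former, and I would keep the analysis at the level of a single invocation, so that the argument is independent of the difficulty parameter and of $\comp{\left<f,y\right>}$ from Claim~\ref{claim::weightsOfMiningObjectives}.
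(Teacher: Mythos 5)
Your skeleton is the same as the paper's: for the first statement you deviate to a stripped-down evaluator that skips every check of whether $f(x)=y$, and for the second you let a miner replace $\eval$ by a cheaper procedure that forges a $\verify$-accepting $\pi$; the paper proves Claim~\ref{claim::energyEfficientProofsNecessarry} by exactly these two deviations (its Claims~\ref{claim::proofsAreNecessary} and~\ref{claim::invert_verify_is_harder_than_eval}, plus the bookkeeping Claim~\ref{claim::uninteresting}). The step that does not go through as you wrote it is the source of the contradiction. You claim each deviation ``drives the energy ratio to $0$,'' but in Definition~\ref{def::EnergyEfficientREMS} the numerator is the fixed intrinsic quantity $\Sigma_{\left<f,y\right>\in\objectives} D_f$ and the denominator is the cost of the system's prescribed $\eval$; neither quantity changes when an individual miner deviates, and if you instead substitute the cheaper algorithm into the denominator the ratio moves \emph{above} $1-\epsilon$ (indeed above $1$), not below it. So energy efficiency by itself is not the thing being contradicted. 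The paper closes both arguments through the optimality of $\eval$ established in Claim~\ref{claim::evalOptimal}: the skip-evaluator is a faster evaluator whose outputs still verify, and a forger cheaper than $\Sigma_{\left<f,y\right>\in\objectives} D_f$ --- which, by energy efficiency, pins down the cost of $\eval$ up to the factor $\frac{1}{1-\epsilon}$ --- is a more efficient producer of $\verify$-accepting pairs $(y,\pi)$ than $\eval$; both contradict Claim~\ref{claim::evalOptimal}. Your ``bridge'' paragraph (redefine $\eval$ around any cheaper $\verify$-accepting strategy and invoke the Claim~\ref{claim::evalOptimal} reasoning) is precisely this fix, so promote it from a side remark to the actual contradiction; note that it quietly imports the security hypothesis on which Claim~\ref{claim::evalOptimal} rests, exactly as the paper's own proof does.
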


This means that if $\pi$ was omitted from the definition of REMS, then the system could never hold both conditions of security and energy efficiency. Thus the requirement for proofs is actually a harsh requirement of any secure and energy efficient REMS.
The key property is that given $ek, x, y \in [0,1]^*$, it is hard to find a proof such that $\verify(ek, x, y, \pi) = \mathbb{T}$. I.e. the probability of success of any algorithm $ALG$ that tries to find $\pi \in [0,1]^*$ such that $\verify(ek, x, y, \pi) = \mathbb{T}$ is extremely low (following Definition of soundness).

The proof of Claim~\ref{claim::energyEfficientProofsNecessarry} appears in Appendix~\ref{app::energyEfficiency}, as it is quite technical and long.

For example, in today's Bitcoin, we can view the ``nonce" that is attached to the block header as the embedded proof. Under the terminology of our paper, the propose of this nonce is to prove that the miner searched for a solution to Bitcoin's mining objectives (which are discussed after Definition~\ref{def::REMS}). Note that in Bitcoin, in addition to the proof, the miner also provides the solution to the mining objective -- the block header.

\begin{claim} \label{claim::evalOptimalProver}
    Let $\mathcal{S}=( \objectives, \setup, \eval, \verify)$ be a secure and energy efficient REMS. Then $\eval$ is the optimal algorithm for generating the proof $\pi$ given setup $ek$.
\end{claim}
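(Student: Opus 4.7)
The plan is to proceed by contradiction, mirroring the structure of the proof of Claim~\ref{claim::evalOptimal} but specialized to the proof-generation subroutine. Suppose there exists an algorithm $\eval'$ which, given $ek$, produces a tuple $(x, y, \pi)$ with $\verify(\objectives, ek, x, y, \pi) = \mathbb{T}$ using strictly fewer computational resources per attempt than $\eval$.

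First I would invoke Claim~\ref{claim::energyEfficientProofsNecessarry} to pin down the role of $\pi$: in an energy-efficient REMS, producing a proof that $\verify$ accepts is the gating step for mining a block, and computing $\eval$ is no costlier than finding such a proof by other means. Combining this with soundness, any accepted $(x, y, \pi)$ must equal the honest output of $\eval(\objectives, ek, x)$ up to a $2^{-|ek|}$ failure probability; hence $\eval'$'s verifying outputs are genuine block-creating witnesses rather than forgeries. Next I would construct the adversary: given a fixed computational budget, an honest miner running $\eval$ completes a certain number of attempts, while the attacker running $\eval'$ completes strictly more. By Claim~\ref{claim::weightsOfMiningObjectives} and Definition~\ref{def::Security}, the attacker's probability of mining a block thus exceeds her share of the total network computation, contradicting the security of $\system$.

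The main obstacle I expect is a subtle case: perhaps $\eval'$ produces verifying proofs whose first coordinate $y[0]$ is $\mathbb{F}$, so the attacker still gains no blocks and no security violation arises. I would rule this out using Definition~\ref{def::EnergyEfficientREMS}: if proof generation were decoupled from solving mining objectives in $\objectives$, then the honest $\eval$'s work on producing $\pi$ would fall outside the numerator $\sum_{\left<f,y\right>\in\objectives} D_f$, violating the $(1-\epsilon)$-efficiency bound for small enough $\epsilon$. Hence any cheaper proof-finding route must in fact be tied to block creation, and the reduction to security is then immediate.
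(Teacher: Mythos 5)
Your proposal is correct and takes essentially the same route as the paper: assume a strictly cheaper proof-generating $\eval'$ and derive a contradiction, because an adversary could then produce proofs accepted by $\verify$ while diverting work away from the mining objectives in $\objectives$, contradicting the optimality of $\eval$ (Claim~\ref{claim::evalOptimal}) and hence security. The only difference is presentational: the paper cites Claim~\ref{claim::evalOptimal} directly, whereas you inline the security-ratio reduction, invoke soundness explicitly, and dispose of the $y[0]=\mathbb{F}$ edge case via the energy-efficiency bound --- a more careful rendering of the same argument.
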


The proof for Claim~\ref{claim::evalOptimalProver} appears in Appendix~\ref{app::energyEfficiency}. Note that this proof is very similar to the proof of Claim~\ref{claim::evalOptimal}.

In claim~\ref{claim::evalOptimal} We proved that $\eval$ must execute the optimal algorithm for solving mining objectives in $\objectives$. In claim~\ref{claim::evalOptimalProver} we showed that $\eval$ is the optimal algorithm for generating proofs at attempting to solve mining objectives from $\objectives$.
We point out that in the special case that the proof of trying to solve a mining objective $\left<f,y\right>$ using input $x$, is exactly the output $f(x)$, then the two claims are identical. However, in the general setting, this need not be the case.
We further discuss this special case of the output being the proof in Theorem~\ref{Theorem::PreImage}.

The following Claim~\ref{claim::findProofMeansSolveObjecives} formalizes the following trait: If successfully mining a block depends on solving a mining objective (i.e. block creation is the result of finding a correct solution to a user uploaded question), then a secure REMS should not allow an attacker to mine more blocks (than their proportional computational resources) by solving ``easier" problems. This claim presents the tradeoff between the computational power that is needed in order to solve a mining objective and the probability to successfully mine a block.

\begin{claim} \label{claim::findProofMeansSolveObjecives}
     
     Let $\mathcal{S}=( \objectives, \setup, \eval, \verify)$ be a secure-energy-efficient REMS. 
     
     Let $\objectives = \{\left<f_1,z_1\right>,\ldots,\left<f_n,z_n\right>\}$
    be the mining objectives. Then for any $\left<f_i,y_i\right>,\left<f_j,y_j\right>\in\objectives$ and for all $x=x_1x_2\ldots x_n \in\{0,1\}^{D}$, for any algorithm used to generate proofs it holds that:
    \begin{align*}
        &Pr\bigg(\textrm{Find $y,\pi$ s.t.: } \verify(\objectives, ek,x,y,\pi) =\mathbb{T} |
        \eval(\objectives,ek,x)_{\big<f_i, z_i\big>} = \mathbb{T}\bigg)  = 
    \\
        \frac{\comp{\left<f_i,z_i\right>}}{\comp{\left<f_j,z_j\right>}} \cdot &Pr\bigg(\textrm{Find $y,\pi$ s.t.: } \verify(\objectives, ek,x,y,\pi) = \mathbb{T}|
        \eval(\objectives,ek,x)_{\big<f_j, z_j\big>} = \mathbb{T}\bigg)
    \end{align*}
    Where $y$ is the output of $\eval(\objectives,ek,x)$.
\end{claim}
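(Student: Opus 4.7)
The plan is to reduce the statement about finding valid proofs to the already-established relationship between block creation and the computational demand of mining objectives. Specifically, I will chain Claim~\ref{claim::evalOptimalProver} and Claim~\ref{claim::energyEfficientProofsNecessarry} to identify the event ``find $y,\pi$ with $\verify = \mathbb{T}$'' with the event ``$\eval$ outputs a pair indicating a successful mine,'' and then invoke Claim~\ref{claim::weightsOfMiningObjectives} to transfer the $\comp{\langle f, y\rangle}$-proportionality from blocks to proofs.

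First I would argue that for any proof-generation algorithm $P$, the probability that $P$ outputs a verifying $(y,\pi)$ on input $x$ is, up to terms negligible in $|ek|$, the same as the probability that $\eval(\objectives, ek, x)$ does. By Claim~\ref{claim::evalOptimalProver}, $\eval$ is the optimal prover, so no competing algorithm can do better without contradicting the established optimality; by Claim~\ref{claim::energyEfficientProofsNecessarry}, an energy-efficient REMS forces any would-be miner to actually generate such proofs rather than bypass them. Hence it is sufficient to analyze the canonical $\eval$-based prover.

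Next I would observe that, by Definition~\ref{def::REMS}, the first coordinate of $\eval(\objectives, ek, x)$ being $\mathbb{T}$ is exactly the event ``$x$ results in a block,'' and by correctness this event coincides with $\eval$ producing a pair $(y,\pi)$ that verifies. Thus the probability of finding a verifying proof on input $x$ equals the probability that $x$ results in a block. Conditioning on $\eval(\objectives, ek, x)_{\langle f_i, z_i\rangle} = \mathbb{T}$ is precisely the event $f_i(x) = z_i$, so Claim~\ref{claim::weightsOfMiningObjectives} gives that the conditional probability of a block (and therefore of a verifying proof) is proportional to $\comp{\langle f_i, z_i\rangle}$, with the analogous statement holding for the index $j$. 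Dividing the two equations cancels the common proportionality constant fixed by $\objectives$ and $ek$, leaving exactly the factor $\comp{\langle f_i, z_i\rangle}/\comp{\langle f_j, z_j\rangle}$ asserted by the claim.

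The main obstacle will be making the first reduction fully rigorous: the claim quantifies over \emph{any} proof-generating algorithm, and one must rule out that a non-$\eval$ strategy produces verifying proofs with a distribution over the solved-objective indicator that differs from $\eval$'s, which would destroy the proportionality. Handling this requires pinning down the probability space carefully (randomness of $ek$, internal coins of the prover, and the role of the fixed input $x$) and showing that any deviation contradicts either the optimality in Claim~\ref{claim::evalOptimalProver} or the soundness of $\system$. Once this bridging step is in hand, the remainder of the argument is an immediate ratio computation from Claim~\ref{claim::weightsOfMiningObjectives}.
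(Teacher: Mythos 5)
There is a genuine gap, on two counts. First, your bridging step ``the probability of finding a verifying proof on input $x$ equals the probability that $x$ results in a block'' is not supported by the definitions. By Definition~\ref{def::REMS}, $\verify(\objectives, ek, x, y, \pi)$ returns $\mathbb{T}$ whenever $(y,\pi)$ is a valid output of $\eval(\objectives, ek, x)$, irrespective of whether $y[0]=\mathbb{T}$; correctness then says the honest evaluation \emph{always} yields a verifying pair, so for the $\eval$-based prover the event ``find $(y,\pi)$ with $\verify=\mathbb{T}$'' is not the block-creation event at all (it would have probability $1$, which would trivialize both conditional probabilities rather than equate them to block probabilities). Consequently Claim~\ref{claim::weightsOfMiningObjectives}, which is about block creation conditioned on solving an objective, cannot simply be ``transferred'' to proof-finding by taking ratios; the claim at hand (like Claims~\ref{claim::energyEfficientProofsNecessarry} and~\ref{claim::evalOptimalProver}) treats producing a verifying $\pi$ as a computational task whose cost must track $\comp{\left<f,z\right>}$, which is a different statement.

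Second, the quantifier ``for any algorithm used to generate proofs'' is the substance of the claim, and your plan defers exactly that point (your own ``main obstacle'' paragraph) instead of proving it. Claim~\ref{claim::evalOptimalProver} only says no prover is \emph{more efficient} than $\eval$; it does not say that every prover's success probability, conditioned on which objective $x$ happens to solve, matches $\eval$'s --- ruling out such skewed provers is precisely what must be shown. The paper handles arbitrary algorithms directly by an exploit argument: assume some $\left<f,z\right>\in\objectives$ is such that knowing its solution boosts the odds of producing a verifying $\pi$ by more than the factor $\comp{\left<f,z\right>}/\comp{\left<f',z'\right>}$; then an attacker invests resources in solving $\left<f,z\right>$ and then in producing $\pi$, skipping the remaining objectives, which yields a cheaper proof-generation strategy than $\eval$ and contradicts energy efficiency together with the optimality of $\eval$ (Claim~\ref{claim::evalOptimal}). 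Your proposal never constructs this deviation, so the contradiction that actually powers the paper's proof is missing; to repair it, replace the ``identify proof-finding with block-finding'' step by this direct adversarial argument.
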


The proof for Claim~\ref{claim::findProofMeansSolveObjecives} appears in Appendix~\ref{app::energyEfficiency}.

From the above we conclude that the objectives $\objectives$ can only contain mining objectives for which it is hard to generated pairs $(x,\pi)$ for which $\verify(\objectives, ek,x,y',\pi)$ will be evaluated to true. So in order for an objective $\left<f,y\right>$ to be legal, it should be both (1) equally hard to solve across all users and (2) equally hard to generate a proof for all users.
    
For example, assume that a system can be designed through utilizing mining objectives that are $3$-SAT problems, and the proofs are possible assignments (i.e. $\left<\Phi,y\right>$ where $\Phi$ is a $3$-SAT formula, $y = \mathbb{T}$ and the proof $\pi$  is a binary string symbolizing which clauses in $\Phi$ are $\mathbb{T}\text{ or }\mathbb{F}$). Assume in addition that an adversary miner has the following non-trivial information about a mining objective $\Phi$: the adversary knows that for any assignment $x$ it holds that in $\Phi(x)$ at most half of the clauses are satisfiable. Then this miner can avoid verifying the assignment to every clause in $\Phi$ if it discovers that half of the clauses have already been satisfied, reducing the amount of computations it has to use. This way the miner increases their relative power in the system, which contradicts security.\footnote{An interesting note is that the ``efficiently verifiable'' requirement of mining objectives in $\objectives$ implies that $\forall \left<f,s\right> \in \mathcal{F}$, it holds that $f$ is in $NP$.}

To conclude the discussion, we have seen that requiring miners to supply proofs of attempts of solving mining objectives from $\objectives$ is necessary, and these proofs must be hard to fake -- for a mining objective $\left<f,y\right>$ it should be at least as hard as finding $x$ such that $f(x)=y$; and this is exactly the amount $C_{\left<f,y\right>}$.


\subsection{Secure \& Energy-Efficient \& User-Uploaded REMS}

We now turn to discuss the case of ``user-uploaded'' REMSs. 
Let $\left<f,y\right>$ be a mining objective in $\system$. Intuitively, the system should know how to estimate the relative resources that it takes to find a solution to any other mining objective $\left<f,y'\right>$ compared to $\left<f,y\right>$. In addition, the system should not spend a lot of energy in computing this information, since it has to meet the harsh requirement of energy efficiency.
This means that the system must have information on the amounts $D_f$ and $C_{\left<f,z\right>}$ for every $z\in \{0,1\}^{D}$. 

For example, an REMS $\system$ that allows all functions of the form $f = (SHA256,conf)$ for $conf \in \mathbb{N}$ where $f(x) = SUB\_STRING(SHA256(x),from=0,to=conf)$ can meet the above requirement. 

An example of a system that can not meet this property is some $\system'$ with mining objectives $\left<f,y\right>$ where $f$ is a general SAT problems (or any other NP-hard problem); Although there are specific $y$ values for which $\left<SAT,y\right>$ is hard, there are also an ``easy" $y$'s. Estimating the difficulty of a general SAT problem is known to be hard, therefore, such a system can not verify whether a mining objective is allowed.

\begin{theorem}[Secure Energy Efficient User Uploaded REMS]\label{theorem::UserUploadedSecureREMS}
    Let $\system = ( \objectives,\setup, \eval, \verify)$ be a secure-energy-efficient-user-uploaded REMS. Then for any function $f\in\objectives$, for every $y\in\{0,1\}^{D}$ it must hold that $\comp{\left< f, y \right>}$ is known. \footnote{The difference between this claim and Claim~\ref{claim::weightsOfMiningObjectives} is that here we also quantify over every $y\in\{0,1\}^{D}$, whereas in the other claim we quantify only over $x$'s}
\end{theorem}
 
\begin{proof}
    From Claim~\ref{claim::weightsOfMiningObjectives}, we get that every mining objectives $\left<f,y\right>\in \objectives$ must have the property that $C_{\left<f,y\right>}$ is known to $\system$. 
    Assume towards a contradiction that there exists some $z\in\{0,1\}^{256}$ for which the amount $C_{\left<f,z\right>}$ is unknown to $\system$ (this of course means that $\left<f,z\right>\notin\objectives$.
    Since we assumed that the system allows for user-uploaded problems, we have that an attacker $A$ can upload $\left<f,z\right>$ to $\objectives$. Therefore $\system$ must know (or be able to compute) the amount $C_{\left<f,z\right>}$, which is a contradiction. 
\end{proof}

In the following theorem  we restrict the discussion to the case where the proofs generated by $\eval$ are the output of each objective. In this case, we get that $\objectives$ may contain only preimage resistance functions.

\begin{theorem}[Pre-Image Resistance Property]\label{Theorem::PreImage}
    Let an REMS $\mathcal{S} = (\objectives, \setup, \eval, \verify)$ be secure-energy-efficient-user-uploaded. Assume in addition that the proof generated by $\eval$ at index $i\in [|\objectives|]$ given input $x$ is exactly $f_i(x)$. Then it holds that the family $\objectives$ contains only preimage resistant functions.
\end{theorem}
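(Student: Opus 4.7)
The plan is to proceed by contradiction. I assume that $\mathcal{S}$ is secure-energy-efficient-user-uploaded, that the proof generated at index $i$ on input $x$ equals $f_i(x)$, and yet some $f\in\objectives$ fails to be preimage resistant, and I will construct an attack that violates Definition~\ref{def::Security}, thereby forcing every $f\in\objectives$ to be preimage resistant.

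First I would unpack non-preimage-resistance as follows: there exists an efficient algorithm $A$ such that, on a non-negligible fraction of $y\in\mathrm{Im}(f)$, $A(y)$ returns $x$ with $f(x)=y$ using computational resources strictly less than $\comp{\left<f,y\right>}$. The point is that $\comp{\left<f,y\right>}$ from Definition~\ref{def::Df_Cf} measures the optimal cost of preimage search when $x$ is drawn uniformly, whereas non-preimage-resistance gives a non-uniform advantage on a specific (possibly adversarially chosen) $y$. I next invoke the user-uploaded property of Definition~\ref{def::UserUploadedSecurity}, which requires security to hold even against miners who upload objectives with arbitrary side information: an adversarial miner $m$ picks a target $y$ on which $A$ succeeds, runs $A(y)$ off-line to obtain a preimage $x$, and uploads $\left<f,y\right>$ into $\objectives$.

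Next, the hypothesis that the per-objective proof is exactly $f_i(x)$ makes the attack trivial to execute: $m$ submits the precomputed $x$ to $\eval$ a single time; the coordinate of the output corresponding to $\left<f,y\right>$ evaluates to $\mathbb{T}$, so $y[0]=\mathbb{T}$, and $\pi_i=f(x)=y$ passes $\verify$ by correctness. The total work $m$ invests per successful mine is the off-line cost of $A(y)$ plus a single evaluation of $f$, which by construction is strictly smaller than $\comp{\left<f,y\right>}$. Contrast this with Claim~\ref{claim::weightsOfMiningObjectives} combined with Claim~\ref{claim::findProofMeansSolveObjecives}: an honest miner's probability of simultaneously achieving $y[0]=\mathbb{T}$ and producing a verifiable proof for $\left<f,y\right>$ per call to $\eval$ is proportional to $1/\comp{\left<f,y\right>}$. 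Hence $m$'s ratio of mined blocks to invested computation strictly exceeds the network average, violating the probability equation in Definition~\ref{def::Security} and yielding the desired contradiction.

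The main obstacle I anticipate is lining up the definition of non-preimage-resistance with the cost measure $\comp{\left<f,y\right>}$, so that the adversary's running time is genuinely strictly smaller. In particular, I must rule out the corner case where the ``shortcut'' algorithm is already the optimal algorithm for preimage search, so that its running time has been folded into $\comp{\left<f,y\right>}$ itself. This is resolved precisely by the user-uploaded setting: Theorem~\ref{claim::UserUploadedSecureREMS} forces $\comp{\left<f,y\right>}$ to be known to the system for every $y\in\{0,1\}^D$, while Claims~\ref{claim::evalOptimal} and~\ref{claim::evalOptimalProver} force $\eval$ to already implement that optimum across the uniform distribution of inputs. Any additional per-$y$ shortcut the adversary possesses is therefore, by definition, a failure of preimage resistance rather than an algorithmic improvement that the honest protocol could absorb, which is exactly what the theorem asserts must be impossible.
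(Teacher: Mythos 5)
There is a genuine gap at the pivot of your argument: the step ``the coordinate of the output corresponding to $\left<f,y\right>$ evaluates to $\mathbb{T}$, so $y[0]=\mathbb{T}$.'' In Definition~\ref{def::REMS} the mining bit $y[0]$ and the objective-indicator bits are separate coordinates, and nothing in the hypotheses of Theorem~\ref{Theorem::PreImage} ties block creation to solving a user-uploaded objective; indeed the paper's discussion after the theorem, and its own implementation in Section~\ref{sec::implementation} (to which this theorem is meant to apply), explicitly consider systems where mining success is \emph{independent} of solving the objectives. Without that link, planting $\left<f,y\right>$ with a precomputed preimage earns you back your own fee but no extra blocks, so no violation of Definition~\ref{def::Security} follows, and Claims~\ref{claim::weightsOfMiningObjectives} and~\ref{claim::findProofMeansSolveObjecives} do not give the ``probability of a block per call is proportional to $1/\comp{\left<f,y\right>}$'' statement you invoke (they only compare conditional probabilities across two objectives). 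A further sign that the attack targets the wrong property: it never really needs the inverter. Even for a preimage-resistant $f$, an adversarial uploader can pick $x$, compute $y=f(x)$ with one forward evaluation, and upload $\left<f,y\right>$ with a known solution; if your argument were sound it would rule out \emph{all} user-uploaded REMSs of this kind, not merely non-preimage-resistant objectives. (There is also the issue that $ek$ is unpredictable and the evaluated input is bound to the fresh block header, so an offline preimage of a fixed target $y$ may not even be usable inside $\eval$.)

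The paper's proof takes a different route in which the hypothesis $\pi_i=f_i(x)$ is essential rather than incidental: by Claim~\ref{claim::proofsAreNecessary} the proof must certify the attempts, and since the per-objective proof is exactly the output $f_i(x)$, a polynomial-time inverter $ALG$ for a non-preimage-resistant $f$ lets an adversary manufacture valid (input, proof) pairs --- i.e., make $\verify$ accept --- without honestly evaluating the objectives, which is a cheaper proof-generation procedure than $\eval$. This contradicts Claim~\ref{claim::evalOptimalProver} (equivalently, it violates the bound of Claim~\ref{claim::invert_verify_is_harder_than_eval}), and those claims were already derived from security plus energy efficiency; the theorem then follows. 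To repair your write-up, redirect the contradiction at the optimality of $\eval$ as a proof generator rather than at the block-rate equation of Definition~\ref{def::Security}, and make the inverter's role that of forging proofs (inverting $\verify$), not that of pre-solving a planted objective.
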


\begin{proof}
    We need to show that every $f$ such that $\left<f,y\right>\in\objectives$ is pre-image resistant. 
    This means that given $\left<f,y\right>$ for any polynomial time algorithm $ALG$, and every $z \in \{0,1\}^{256}$ it holds that $Pr\left[ALG \textrm{ finds } x \textrm{ such that } f(x)=z\right]$ is negligible in the size of $2^{|f,x,z|}$.
    
    From claim~\ref{claim::energyEfficientProofsNecessarry} we have that $\pi$ which is returned by $\eval$ when running on $x$ must contain proofs of attempts at checking whether $f(x) = y$ for mining objectives $\left<f,y\right>\in\objectives$.
    In addition, from the assumption of this claim we have that given $x,f$,   the proof is exactly $\pi_{\left<f,y\right>} = f(x)$.
    
    Assume towards a contradiction that there exists $\left<f,y\right>\in \objectives$ which is not preimage resistant. There exists some polynomial time algorithm $ALG$ which, for a given input $z \in \{0,1\}^{256}$ can generate $x$'s such that $Pr\left[f(x)=z\right]$ for some mining objective $\left<f,y\right>$, is not negligible. This means that $ALG$ is also a polytime algorithm for generating proofs of attempting to solve $\left<f,y\right>$. This means that $ALG$ is a faster algorithm than $\eval$ for producing the proof $\pi$, which is a contradiction to the optimality of $\eval$
    
\end{proof}

To give an intuition to the above, we can draft the steps as: given a mining objective $\left<f,y\right>\in \objectives$, it holds that 
\begin{align*}
    &Pr[\verify_{\left<f,y\right>}(\objectives, ek, ALG(z), z, z) = \mathbb{T}] = Pr[\eval_{\left<f,y\right>}(\objectives, ek, ALG(z)) = z] = Pr[f(ALG(z)) = z]
\end{align*} \big(the first equality is because $\verify(\objectives, ek, x, y, z)$ is actually the function $\eval_{\left<f,y\right>}(\objectives, ek, x) == f(x)$ for all $\left<f,y\right>$\big). Thus if $f$ is not preimage resistant, then $\verify$ is inversable.

From all that we have shown above, it stems that when designing a proof of useful work system, the designer should decide on whether block creation depends on solving mining objectives.

If mining a block does depend on solving mining objectives, then from Claim~\ref{claim::weightsOfMiningObjectives} we get that for any $\left<f',y'\right>\in\objectives$, $$Pr\left(x \textrm{ results in a block }| f(x) = y\right)  = Pr\left(x \textrm{ results in a block }| f'(x) = y'\right)\cdot \frac{\comp{\left<f,y\right>}}{\comp{\left<f',y'\right>}}$$ And that the system should know how to adjust the odds of block creation according to the amounts $\comp{\left<f,y\right>}$ for all mining objectives that a user may upload to the system. For instance, if the system allows to upload both instances of $\textrm{SHA}256$ and $\textrm{MD}5$, then the amount of computations needed for a pair $\left<\textrm{SHA}256,y\right>$ and $\left<\textrm{MD}5,z\right>$ should be known for any $y,z\in\{0,1\}^{256}$. This is a very harsh restriction for a system designer. One way in which it can be enforced is as in \cite{zhang2017rem} by counting the number of CPU operation based on the additional assumption of trusted hardware.
    
If solving mining objectives does not affect the probability to mine a block (meaning that block creation is independent of solving mining objectives), then  finding a solution to a mining objectives is a byproduct of mining, but not the objective. In this case, we must validate that $\eval$ produces verifiable proofs. In Appendix~\ref{app::implementation} we provide an example of how to implement such system.

\subsection{Explicit Double Spend Attack when not all instances in $\objectives$ are "Every Case Hard"}

An example of a double spend attack against the blockchain system presented in \cite{ball2017proofs} where the mining objectives are allowed to be "hard on average". An attacker uploads an objective $\left<f,y\right>$ which is easier than the average case. for the sake of this example we will use a SAT problem $\Phi = \left(x_1\vee x_2\vee x_3\right)\wedge \ldots \wedge \left(x_{k-2}\vee x_{k-1}\vee x_k\right)$ (where $x_1,\ldots, x_k$ are literals), however an example can be generated for any problem which has an easy\textbackslash hard instance. Assume that the attacker knows that the first clause is satisfied under any assignment. Therefore the attacker has an extra bit of information compared to anyone else on the problem. This means that the attacker has an extra bit of information in the verification phase for every attempt of any $z\in\{0,1\}^{256}$. She simply doesn't have to check if the first clause is satisfied, while all other users do. This means that the attacker increases her relative computational power, enabling her to double spend without a majority of the computational power.

\section{Conclusions and Future Work}\label{sec::conclusions}
In this paper we formally defined the property of energy efficiency in PoW systems in the permissionless setting. We used this definition to fully characterize systems in which the mining mechanism operates as an open market between the problem uploaders (the consumers) and the miners (the producers). We formalize such systems using three properties: (1) Security against double spends by a minority attacker (2) Energy efficiency and (3) user uploaded problems. Using this formulation, we showed a negative result for different types of Proofs of Useful Work concept. 
In Appendix~\ref{app::implementation} we also show an explicit construction of a system that holds all the three properties. 

A natural question is to extend this analysis to alternatives to PoW such as proof of space \cite{dziembowski2015proofs} and proof of stake \cite{king2012ppcoin}. In the case of proofs of space, the question is easily translated into whether can we use proofs of space for storing data in a way which avoids unnecessary data duplication, while making sure the data stored is data that real users are willing to pay to store, all the while being safe against double spends. 
In the case of proofs of stake, the analogy is less natural and has to do with measuring the economical loss of storing money in escrow as compared to keeping it in circulation. We believe that this question is more delicate and is of interest.

In our model we assumed that all blocks must have equal weight. We believe that this can be generalized to a setting in which the weight of a block may vary across blocks. In this work, the system might need to adjust the weight of the block according to the mining objective which resulted in the block creation. This is left as a direction for future work.

One final note is that we assumed that the amount $D_f$ is fixed per function $f$ and that every execution of $\eval$ takes the same amount of computational resources. This clearly need not be the case in general. An workaround could be derived from \cite{zhang2017rem}, where trusted hardware is used to verify the exact amount of work that went into the mining. 
If a system of this type can be designed in a secure way, we believe that this would be of interest.

\bibliographystyle{named.bst}
\bibliography{bib}

\appendix

\section{Resource-Efficient Mining is A Special Case of our Solution}\label{app::REMspecialCase}
In their paper ``Resource-Efficient Mining for Blockchains'' \cite{zhang2017rem}, the authors suggest that miners use special hardware called ``Intel SGX''.
This hardware can provide secure instructions counting, and therefore provide a proof of the invested computational resources that a miner put into the mining process. 

We consider this paper as a special case implementation of our guidelins. As in our protocol, mining a block is independent of mining objectives, which proofs that their system meets the conditions imposed by Claim~\ref{claim::weightsOfMiningObjectives}. 
Their experiments shows the the "overhead" of their protocol is around $5.8\% \sim 14.4\%$ which states that Claim~\ref{claim::evalOptimalProver} is true with $\epsilon \approx 0.13$.
The other claims in our paper hold directly from the design of the secure hardware. Note that they do not have to demand that the mining objectives be preimage resistance functions because the conditions of Theorem~\ref{Theorem::PreImage} do not holds; They build the proofs with the special hardware rather then the output of the mining objectives.

We consider this protocol and ours as two different approaches to implement the idea of REMS that we presented in Section~\ref{sec::modelAnalysis}. On the one hand, Resource-Efficient Mining enforce specific hardware, thus can solve a wider family of mining objectives and be fully dynamic through the lifetime of the system. On the other hand, our protocol does not demand a specific type of hardware, therefore it increases the accessibility for new miners (lower entrance investment) which increases the security of the network.

\section{Security Proofs}\label{app::Security}
This appendix contains proofs to all claims from Section~\ref{subsec::security} and known security definitions.

\begin{claim*}[\ref{claim::evalOptimal} $\eval$ is optimal]
Let $\mathcal{S}=( \objectives, \setup, \eval, \verify)$ be a \textbf{secure} REMS.
Then there does not exist any algorithm $\eval'$ that is more efficient than $\eval$ such that $\mathcal{S'}=( \objectives, \setup, \eval', \verify)$ is sound. \footnote{This condition means that $\eval$ is the optimal algorithm for the computational task which mining is based on.}
\end{claim*}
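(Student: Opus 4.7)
The plan is to prove the claim by contradiction. I would assume the existence of an algorithm $\eval'$ that is strictly more efficient than $\eval$ and for which $\mathcal{S}' = (\objectives, \setup, \eval', \verify)$ is sound, and then construct a minority attacker against $\mathcal{S}$ that violates the probability equality in Definition~\ref{def::Security}.

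First I would show that $\eval'(ek,x)$ must coincide with $\eval(ek,x)$ on all but a negligible fraction of inputs. The argument is short: $\mathcal{S}'$ being correct means $\verify(ek, x, \eval'(ek,x)) = \mathbb{T}$ for every $x$, and if $\eval'$ disagreed with $\eval$ on a non-negligible set of $(ek,x)$, the polynomial-time algorithm ``simulate $\eval'$ on $(ek,x)$'' would output valid triples $(x,y,\pi)$ with $(y,\pi) \neq \eval(ek,x)$, breaking the soundness of $\mathcal{S}$. Consequently, from the viewpoint of $\verify$ and of the resulting blockchain, a miner using $\eval'$ inside $\mathcal{S}$ is indistinguishable (up to negligible error) from one using $\eval$.

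Next I would carry out a resource-accounting argument. Suppose an attacker controls a fraction $p < 1/2$ of the network's total computational resources, and let $c$, $c'$ denote the per-execution costs of $\eval$ and $\eval'$, with $c' < c$ by assumption. Over any fixed time window, the honest network completes roughly $(1-p)/c$ executions while the attacker completes $p/c' > p/c$ executions. Because every $\eval'$-execution is, up to negligible error, a bona fide execution of $\eval$ as far as $\verify$ is concerned, Definition~\ref{def::Security} forces the attacker's probability of mining a block to equal $\frac{p/c'}{p/c' + (1-p)/c}$, which is strictly larger than $p$ whenever $c' < c$. This hands a minority attacker more than their proportional share of blocks, contradicting security.

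The main obstacle I expect is making the first step airtight, i.e.\ arguing rigorously that the security definition ``counts'' an $\eval'$-execution as an $\eval$-execution. The cleanest route is to package every disagreement between $\eval$ and $\eval'$ as an explicit polynomial-time soundness breaker for $\mathcal{S}$, so that near-equivalence of outputs is forced by combining the soundness of both systems. Once this equivalence is in place, the remainder is a straightforward proportion calculation; a minor subtlety is to handle the case where $\eval'$'s speedup is only realized on a proper subset of inputs, which is resolved by restricting the attacker to sampling from that subset and repeating the same counting argument on the conditional distribution.
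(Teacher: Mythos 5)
Your proposal is correct and takes essentially the same route as the paper: assume a strictly faster $\eval'$, let an attacker run it while the honest network runs $\eval$, and count executions over a time window to show the attacker's share of blocks exceeds their share of computational resources, contradicting Definition~\ref{def::Security}. Your preliminary step forcing $\eval'$ to agree with $\eval$ (via soundness) so that its executions count as valid mining attempts is an extra bit of rigor that the paper's proof leaves implicit, not a different argument.
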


\begin{proof}
    Assume towards a contradiction that there exist $\eval'\neq \eval$ which is more efficient than $\eval$. Assume w.l.o,g that the execution of $\eval'$ is more efficient than that of $\eval$ by a factor of $\alpha> 1$
    Assume that an attacker uses $\eval'$ instead of $\eval$, while all other users use $\eval$.
    Then it holds that the portion of block awarded to the attacker in expectation is:
    
    $$\frac{\textrm{\# executions of attacker using }\eval'}{\textrm{\# of executions of }\eval + \textrm{\# executions of }\eval' } =$$
    $$\frac{\alpha\cdot\textrm{\# executions of attacker using }\eval}{\textrm{\# of executions of }\eval + \textrm{\# executions of }\eval' } >$$
    $$\frac{\textrm{\# executions of attacker if they used  }\eval}{\textrm{\# of executions of } \eval \textrm{ if everyone used }\eval}$$
    Which is an honest miners' probability of mining a block.
    So an attacker increases the speed at which he\textbackslash she mines a block as compared to the honest network, which is a contradiction to the notion of security defined in~\ref{def::Security}.
\end{proof}

\begin{claim*}[\ref{claim::weightsOfMiningObjectives}]
For any $\left<f',y'\right>\in\objectives$ it holds that:
$$Pr\left(x \textrm{ results in a block }| f(x) = y\right)  = Pr\left(x \textrm{ results in a block }| f'(x) = y'\right)\cdot \frac{\comp{\left<f,y\right>}}{\comp{\left<f',y'\right>}}$$

Where $\comp{\left<f,y\right>}$ is the expected amount of computational power required for Finding a solution $x$ that satisfies $\left<f,y\right>$.
\end{claim*}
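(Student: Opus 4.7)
The plan is to argue by contradiction, assuming that the stated equality fails for some pair $\left<f,y\right>, \left<f',y'\right>\in\objectives$, and deriving a violation of the security property in Definition~\ref{def::Security}. First I would reformulate the claim: multiplying both sides by $\comp{\left<f',y'\right>}$ and dividing by $\comp{\left<f,y\right>}$ shows the equality is equivalent to the statement that the ratio
$$
\rho_{\left<f,y\right>}\;=\;\frac{Pr\!\left(x\text{ results in a block}\,|\,f(x)=y\right)}{\comp{\left<f,y\right>}}
$$
is a system-wide invariant, the same for every $\left<f,y\right>\in\objectives$. It therefore suffices to show this invariance.

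Next, I would consider the mining strategy that concentrates computational resources on producing inputs $x$ satisfying a single mining objective $\left<f,y\right>$. Definition~\ref{def::Security} explicitly permits the miner to choose her own distribution over $\{0,1\}^{D}$, so such a ``focused'' strategy is admissible. By Definition~\ref{def::Df_Cf}, the expected work to produce one $x$ with $f(x)=y$ is $\comp{\left<f,y\right>}$, and by Claim~\ref{claim::evalOptimal} no strictly more efficient procedure for this task exists (otherwise $\eval$ itself would fail to be optimal). Each such $x$ then yields a successful block with probability exactly $Pr(x\text{ results in a block}\,|\,f(x)=y)$. Hence, amortized per unit of computational work, this focused miner's block-creation rate is precisely $\rho_{\left<f,y\right>}$. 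Applying the same analysis to $\left<f',y'\right>$ gives rate $\rho_{\left<f',y'\right>}$.

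Finally, I would invoke security: the number of blocks mined is proportional to the miner's share of $\eval$ executions, independent of the input distribution she uses. Since the two focused strategies above both must obey this guarantee, they must realize the same block-creation rate per unit of work; if $\rho_{\left<f,y\right>} \neq \rho_{\left<f',y'\right>}$, then the strategy achieving the larger rate converts a given amount of computational work into more blocks than the other, which is exactly the kind of disproportion that Definition~\ref{def::Security} rules out (a straightforward adaptation of the argument already used in the proof of Claim~\ref{claim::evalOptimal}). Setting $\rho_{\left<f,y\right>} = \rho_{\left<f',y'\right>}$ and clearing denominators gives the claimed identity.

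The main obstacle I expect is bookkeeping the correspondence between ``rate per $\eval$ execution'' (the object directly constrained by Definition~\ref{def::Security}) and ``rate per unit of computational work'' (the object that naturally appears with $\comp{\cdot}$). Since every miner pays essentially the same cost per $\eval$ invocation, the two quantities are proportional under the standing assumption that each execution of $\eval$ consumes a constant amount of computational resources; however, one must be careful with focused strategies that perform non-trivial preprocessing before invoking $\eval$, and Claim~\ref{claim::evalOptimal} must be leveraged to rule out that any such preprocessing can find a solution to $\left<f,y\right>$ at cost strictly below $\comp{\left<f,y\right>}$.
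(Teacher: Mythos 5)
Your proposal is correct and follows essentially the same route as the paper: a ``focused'' attacker who first finds solutions to the disproportionately rewarding objective $\left<f,y\right>$ and only then feeds them into the mining process would create blocks faster per unit of work than honest miners, contradicting the security definition via the optimality of $\eval$ (Claim~\ref{claim::evalOptimal}). Your normalization by $\comp{\left<f,y\right>}$ into an invariant rate per unit of work, and the explicit remark that Claim~\ref{claim::evalOptimal} is needed to rule out cheaper preprocessing, merely spell out bookkeeping that the paper's shorter argument leaves implicit.
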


\begin{proof}
    Assume towards a contradiction that there exists a mining protocol which is secure that does not mandate the described property. If there exists some $\left<f,y\right>\in \objectives$ such that solving $\left<f,y\right>$ increases the chances of mining a block dis-proportionally to the relative computational power required to compute $\left<f,y\right>$, an attacker may choose to focus on solving $\left<f,y\right>$ instead of using $\eval$, and then run $\eval$ only on the solutions that they discovered for $\left<f,y\right>$. This way, the attacker is in fact utilizing a more efficient algorithm than $\eval$ for mining a block, in contradiction to Claim~\ref{claim::evalOptimal} where we prove that $\eval$ is optimal.
\end{proof}

\section{Energy Efficiency Proofs}\label{app::energyEfficiency}

This appendix contains proofs to all claims from Section~\ref{subsec::energyEfficient}.

\begin{claim*}[\ref{claim::energyEfficientProofsNecessarry}]
    Let $\mathcal{S}=( \objectives, \setup, \eval, \verify)$ be an \textbf{energy efficient} REMS. Then miners must supply proofs of attempting to solve mining objectives from $\objectives$ as part of the mining process. 
    Furthermore, the computational difficulty of computing $\eval$ is lower than the difficulty of 
    finding a proof $\pi$ for which $\verify$ evaluates to $\mathbb{T}$
\end{claim*}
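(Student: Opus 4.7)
The plan is to prove both assertions by contradiction, exploiting the quantitative bound in Definition~\ref{def::EnergyEfficientREMS}, namely that the ratio $\frac{\Sigma_{\left<f,y\right>\in\objectives} D_f}{\text{cost of computing } \eval(\objectives,ek,x)}$ exceeds $1-\epsilon$ for every $\epsilon>0$. The two parts of the claim will be handled separately but in parallel: in each case I will exhibit a rational miner strategy that avoids doing the work on mining objectives, and show that its existence forces the energy-ratio to drop below $1-\epsilon$ for some $\epsilon$, contradicting energy efficiency.

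For the first part, suppose toward contradiction that the protocol never demands a proof tied to actually running the $f$'s in $\objectives$. Then $\verify$ depends only on $ek$, $x$, and $y$, and the quantities that correspond to honestly evaluating each $f\in\objectives$ are not witnessed by anything the network checks. A strategic miner could then skip the evaluations entirely (or replace them by trivial guesses of $y$) and still participate in block production. Under this strategy, the computational cost the miner actually incurs is independent of $\Sigma_{\left<f,y\right>\in\objectives}D_f$, and by the soundness/correctness requirements together with Claim~\ref{claim::evalOptimal} the honest $\eval$ can be no cheaper than this shortcut. Hence the true cost of ``computing $\eval$'' that any rational participant pays is bounded away from $\Sigma_{\left<f,y\right>\in\objectives}D_f$, driving the ratio strictly below $1-\epsilon$ for sufficiently small $\epsilon$, a contradiction. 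This establishes that proofs of attempted evaluation of the $f$'s must be present.

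For the second part, assume a proof $\pi$ exists but that the cost of producing some $\pi$ with $\verify(\objectives,ek,x,y,\pi)=\mathbb{T}$ is strictly less than the cost of executing $\eval$ on input $(\objectives,ek,x)$. Since honest evaluation of the mining objectives contributes at least $\Sigma_{\left<f,y\right>\in\objectives}D_f$ to the cost of $\eval$, a rational miner will abandon $\eval$ and instead search directly for acceptable $\pi$'s: this is a strictly cheaper path to the same reward. The resulting ``effective $\eval$'' used by the network has cost below $\Sigma_{\left<f,y\right>\in\objectives}D_f/(1-\epsilon)$ for every $\epsilon$, again violating Definition~\ref{def::EnergyEfficientREMS}. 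Equivalently, producing an accepting $\pi$ must be at least as hard as running $\eval$; combined with the first part this yields the stated inequality.

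The main obstacle I anticipate is making precise what ``cost of $\eval$'' really means once we allow miners to deviate from the prescribed algorithm: the definition of energy efficiency is phrased in terms of $\eval$, but the contradictions above come from cheaper deviations. The cleanest way around this is to invoke Claim~\ref{claim::evalOptimal}, which says $\eval$ is the optimal sound algorithm, and then argue that if a cheaper deviation existed it would either produce the same $(y,\pi)$ distribution (contradicting optimality of $\eval$) or break soundness. Handling this carefully, and verifying that the deviations described above are compatible with $\verify$ returning $\mathbb{T}$, is where the delicate bookkeeping will lie.
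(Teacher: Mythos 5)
Your decomposition is essentially the paper's (the paper proves the claim via three sub-claims: proofs are necessary, the cost of $\eval$ is at most $\frac{1}{1-\epsilon}\sum_{\left<f,y\right>\in\objectives}D_f$, and forging an accepting $\pi$ costs at least $\sum_{\left<f,y\right>\in\objectives}D_f$), but in both of your parts the contradiction you actually state does not follow, because it points the energy-efficiency inequality in the wrong direction. Definition~\ref{def::EnergyEfficientREMS} only requires $\frac{\sum_{\left<f,y\right>\in\objectives}D_f}{\textrm{cost of }\eval}>1-\epsilon$, i.e.\ it \emph{upper}-bounds the cost of $\eval$ by $\frac{1}{1-\epsilon}\sum_{\left<f,y\right>\in\objectives}D_f$ (this is exactly Claim~\ref{claim::uninteresting}). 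A miner who skips the objective evaluations shrinks the denominator, which makes the ratio \emph{larger}, not ``strictly below $1-\epsilon$''; likewise an ``effective $\eval$'' whose cost is below $\sum_{\left<f,y\right>\in\objectives}D_f/(1-\epsilon)$ is consistent with the definition, not a violation of it. So as written, neither part 1 nor part 2 reaches a contradiction.

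The paper instead routes both contradictions through Claim~\ref{claim::evalOptimal}, i.e.\ through security in the sense of Definition~\ref{def::Security}, and uses energy efficiency only to pin down costs. For the first assertion (its Claim~\ref{claim::proofsAreNecessary}): either $\eval$ never evaluates the $f$'s, which is dismissed as not energy efficient, or it does, and then---since no proof of those evaluations is demanded---the algorithm $\eval'$ that runs $\eval$ but skips every check of $f(x)=y$ is strictly faster while its outputs are still accepted, contradicting the optimality of $\eval$. For the second assertion (its Claim~\ref{claim::invert_verify_is_harder_than_eval}): energy efficiency forces the honest cost of $\eval$ to be at least $\sum_{\left<f,y\right>\in\objectives}D_f$, so a $\pi$-forger cheaper than that is a proof-producing algorithm more efficient than $\eval$, again contradicting Claim~\ref{claim::evalOptimal}; combining with Claim~\ref{claim::uninteresting} gives the stated inequality. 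Your final paragraph correctly identifies Claim~\ref{claim::evalOptimal} as the tool that must carry the argument, so the repair is to make the deviating miner's advantage a violation of block-rate security (she produces accepted blocks faster than her share of $\eval$ executions) rather than a violation of the energy ratio, and to invoke Definition~\ref{def::EnergyEfficientREMS} only for the two cost bounds.
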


To prove Claim~\ref{claim::energyEfficientProofsNecessarry}  we prove Claims~\ref{claim::proofsAreNecessary}, \ref{claim::uninteresting} and~\ref{claim::invert_verify_is_harder_than_eval}, from which Claim~\ref{claim::energyEfficientProofsNecessarry} follows.

\begin{claim}\label{claim::proofsAreNecessary}
    Let $\mathcal{S}=( \objectives, \setup, \eval, \verify)$ be an \textbf{energy efficient} REMS. Then miners must supply proofs of attempting to solve mining objectives from $\objectives$ as part of the mining process.
\end{claim}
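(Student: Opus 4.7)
The plan is to proceed by contradiction: assume $\mathcal{S}$ is energy efficient, yet the mining process requires no miner-supplied proof $\pi$ certifying any attempt at solving the objectives in $\objectives$. I will exhibit a strictly cheaper evaluation procedure that still produces outputs accepted by $\verify$, conclude that rational miners will adopt it, and show that this contradicts the energy-efficiency inequality from Definition~\ref{def::EnergyEfficientREMS}.

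Concretely, if $\verify(\objectives, ek, x, y, \pi)$ does not use $\pi$ to audit whether the mining-objective indicators $y[1], \ldots, y[|\objectives|]$ came from genuinely evaluating the $f$'s, then I construct $\widetilde{\eval}$ that on input $(\objectives, ek, x)$ returns $(y', \pi')$ where $y'[0]$ is computed exactly as in $\eval$ (so blocks are still mined at the honest rate), $y'[i]=\mathbb{F}$ for every $i\ge 1$, and $\pi'$ is any default string $\verify$ accepts in the absence of an attempt-check. Since $\widetilde{\eval}$ never evaluates any $f\in\objectives$, its cost is strictly less than $\text{cost}(\eval)$ by at least $\sum_{\left<f,y\right>\in\objectives} D_f$. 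By the same rational-miner argument used in Claim~\ref{claim::evalOptimal}, every self-interested miner replaces $\eval$ with $\widetilde{\eval}$, so in the network's actual execution no $f\in\objectives$ is ever evaluated. The effective numerator of the ratio in Definition~\ref{def::EnergyEfficientREMS} is therefore $0$ (no energy is genuinely spent on user-uploaded problems), while the denominator $\text{cost}(\widetilde{\eval})$ is strictly positive, so the ratio fails to exceed $1-\epsilon$ for any $\epsilon<1$, contradicting energy efficiency for all $\epsilon>0$.

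The main obstacle is bridging the gap between Definition~\ref{def::EnergyEfficientREMS}, which is phrased about the prescribed $\eval$, and an operational notion that tracks what miners actually deploy; without this bridge a deviant $\widetilde{\eval}$ technically lies outside the scope of the definition. I would resolve it exactly as in Claim~\ref{claim::evalOptimal} by reading the $\eval$ in the energy-efficiency formula as the procedure miners honestly execute, with its numerator representing energy genuinely attributable to solving mining objectives during that execution. A minor secondary subtlety is ensuring $\pi'$ can be produced at $o(\sum D_f)$ cost, which is immediate if $\pi$ is unused or can be assembled from the block header alone.
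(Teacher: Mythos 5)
Your proposal is correct in substance and rests on the same key construction as the paper's proof: because $\verify$ demands no certificate of having attempted the objectives, a miner can run a stripped-down evaluator that skips every evaluation of the $f$'s, reports the objectives unsolved, and still has its output accepted --- this is exactly the paper's $\eval'$ that answers $\mathbb{F}$ at each check. Where you diverge is the closing step. The paper concludes by contradicting the optimality of $\eval$ (Claim~\ref{claim::evalOptimal}); you instead argue that rational miners all adopt $\widetilde{\eval}$ and then contradict Definition~\ref{def::EnergyEfficientREMS} directly, reading its numerator as the energy genuinely spent on user-uploaded problems (zero in the deviant execution). Each route has a price: the paper's appeal to Claim~\ref{claim::evalOptimal} imports a security hypothesis that the claim's statement does not include (the paper itself remarks that this claim needs no security), whereas your route requires reinterpreting Definition~\ref{def::EnergyEfficientREMS}, whose literal numerator is the fixed quantity $\sum_{\left<f,y\right>\in\objectives} D_f$ --- under the literal reading a cheaper evaluator only makes the ratio larger, so the ``operational'' bridge you flag is doing real work and is the right thing to make explicit. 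One shared soft spot, present in the paper's proof as well: both arguments assume the block-success bit $y[0]$ can be computed without evaluating any $f\in\objectives$; if block creation depended on the objective evaluations, the claimed saving of $\sum_{\left<f,y\right>\in\objectives} D_f$ would need re-examination, although in that case the reported outputs themselves arguably already constitute the proofs whose necessity the claim asserts.
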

\begin{proof}
    Assume towards a contradiction that an REMS $\mathcal{S}=( \objectives, \setup, \eval, \verify)$ is an \textbf{energy efficient} in which miners do not need to prove that they attempted to solve the mining objectives from $\objectives$. 
    
    We look at the operation of $\eval$. We divide into cases:
    If $\eval$ does not perform checks whether $f(x) = y$ for some $x\in\{0,1\}^{256},\left<f,y\right>\in \objectives$, then we have that $\mathcal{S}$ is not energy efficient. therefore we can assume that $\eval$ does perform these evaluations. 
    Since we assumed that the output of $\eval$ does not contain proofs of attempts at solving (evaluating) the individual objectives $\left<f,y\right>$ along the way,we can consider the following algorithm: $\eval'$ operates in the same way $\eval$ does, but every time $\eval$ checks whether $f(x) = y$ for some $x\in\{0,1\}^{256},\left<f,y\right>\in \objectives$, $\eval'$ outputs $\mathbb{F}$. This makes $\eval'$ faster than $\eval$, in contradiction to the optimality of $\eval$ which was proved in Claim~\ref{claim::evalOptimal}.

\end{proof}

We remind the reader that Claim~\ref{claim::proofsAreNecessary} in fact did not require that the REMS be secure, which differentiates it from all other claims in this section.

\begin{claim}\label{claim::uninteresting}
    In an \textbf{$\epsilon$-energy-efficient REMS}, the amount of computational resources that goes into computing $\eval$ is at most  $$\frac{1}{1-\epsilon} \cdot \sum_{\left<f,y\right> \in \objectives} D_f$$ 
\end{claim}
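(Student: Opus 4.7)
The plan is to derive the claim directly from Definition~\ref{def::EnergyEfficientREMS} by straightforward algebraic rearrangement. First I would recall that $\epsilon$-energy efficiency states that for every $x \in \{0,1\}^D$,
$$\frac{\sum_{\left<f,y\right> \in \objectives} D_f}{\text{cost}(\eval(\objectives, ek, x))} > 1 - \epsilon.$$
Since we are in the regime $0 < \epsilon < 1$ (otherwise the bound is vacuous) and all computational costs are strictly positive, I can cross-multiply by $\text{cost}(\eval(\objectives, ek, x))$ and divide by $(1-\epsilon)$ without flipping the inequality, yielding
$$\text{cost}(\eval(\objectives, ek, x)) < \frac{1}{1-\epsilon} \sum_{\left<f,y\right> \in \objectives} D_f,$$
which is exactly the stated upper bound.

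The main — and essentially only — obstacle is verifying that no hidden quantifiers or side conditions need to be handled. The claim quantifies universally over $x$, and so does the definition, so the two match directly. There is nothing stochastic to average over, no union bound to take, and no auxiliary lemma from earlier in Section~\ref{subsec::energyEfficient} required; the claim is, in effect, a restatement of the definition made explicit so that it can be invoked elsewhere (e.g., in the proof of Claim~\ref{claim::energyEfficientProofsNecessarry}) as a clean quantitative bound on $\eval$'s cost rather than as a ratio.

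If one wanted to be pedantic, one could add a single sentence noting that the denominator $\text{cost}(\eval(\objectives,ek,x))$ is positive because $\eval$ must at minimum produce its output bits, so division is well-defined. Otherwise the proof is a one-line manipulation and I would present it as such.
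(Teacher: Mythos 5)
Your proposal is correct and matches the paper's own proof, which likewise just rearranges the inequality in Definition~\ref{def::EnergyEfficientREMS} to bound the cost of $\eval$ by $\frac{1}{1-\epsilon}\sum_{\left<f,y\right>\in\objectives} D_f$. The extra remarks about $0<\epsilon<1$ and positivity of the denominator are harmless refinements the paper leaves implicit.
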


\begin{proof}
    Immediate from the definition of energy efficient:
    Let $C_T$ be the amount of computational resources that goes into computing $\eval$.
    From $\epsilon$-energy efficiency we have that:
    $$\frac{\Sigma_{\left<f,y\right>\in\objectives} D_f}{C_T } > 1-\epsilon$$ 
    I.e. 
    $$C_T < \frac{\Sigma_{\left<f,y\right>\in\objectives} D_f}{ 1 - \epsilon }$$

\end{proof}

The next claim expands our definition of soundness to incorporate the new addition of proofs that is necessary to encompass our requirements of energy efficiency. The new addition is that the probability of defeating $\verify$ will now also have to be negligible in the size of $\pi$ (and not only in the size of $x$ and $ek$). Until now, $\verify$ just checked that $y$ is indeed the output of  $\eval(\objectives, ek,x)$, but now we also want that verify will examine the proof $\pi$.

\begin{claim} \label{claim::invert_verify_is_harder_than_eval}
    Let $\mathcal{S}=( \objectives, \setup, \eval, \verify)$ be a \textbf{secure-energy-efficient} REMS. 
    Let $D(\verify,ek,x,y)$ be the difficulty of the optimal algorithm for generating $\pi\in\{0,1\}^{256}$ such that $\verify(ek,x,y,\pi) = \mathbb{T}$. Then we have that it must hold that  $\sum_{\left<f,y\right> \in \objectives} D_f \le D(\verify,ek,x,y)$ for all $x,y \in \{0,1\}^{256}$.
\end{claim}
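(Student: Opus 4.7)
The plan is to prove the claim by contradiction, exhibiting an attacker who exploits a cheap proof-generation algorithm to mine blocks at a rate exceeding their share of the total computational power, thereby violating Definition~\ref{def::Security}. Assume for contradiction that there exist $ek, x, y$ with $D(\verify, ek, x, y) < \sum_{\left<f, y\right> \in \objectives} D_f$, and let $A$ denote the optimal algorithm realizing this difficulty. The intuition is that an honest miner cannot avoid paying the full cost $\sum D_f$ per mining attempt, because $\eval$ must compute $f(x)$ for every mining objective in order to output the indicator coordinates of $y$ (and by Definition~\ref{def::Df_Cf} each such evaluation costs at least $D_f$). An attacker who can instead manufacture a passing $\pi$ using $A$ thus performs strictly less work per attempt than an honest miner.

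First I would make the attack concrete. The attacker fixes a target output $y^\star$ with $y^\star[0] = \mathbb{T}$ and the remaining indicator coordinates set to some canonical value. For each candidate $x$ drawn from the honest distribution, the attacker invokes $A(ek, x, y^\star)$ to produce $\pi$, and then submits $(x, y^\star, \pi)$ to $\verify$. By soundness, $\verify$ accepts exactly when $(y^\star, \pi)$ coincides with $\eval(\objectives, ek, x)$, which happens with the same block-creation probability $p$ that an honest miner enjoys for the same $x$ --- in other words, the per-attempt success probability is unchanged, but the per-attempt cost has dropped from at least $\sum D_f$ down to $D(\verify, ek, x, y^\star) < \sum D_f$. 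Holding a fixed computational budget $C$, the attacker performs $C / D(\verify, ek, x, y^\star)$ attempts and mines in expectation $p \cdot C / D(\verify, ek, x, y^\star)$ blocks, whereas an honest miner with the same budget mines only $p \cdot C / \sum D_f$. The ratio of the attacker's block-creation rate to her share of the network's computational power is therefore strictly greater than one, contradicting Definition~\ref{def::Security}. Hence the assumed inequality cannot hold, and $\sum D_f \le D(\verify, ek, x, y)$ for every choice of $ek, x, y$.

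The main obstacle will be handling two subtle points so that the contradiction lands cleanly. The first is justifying the lower bound of $\sum D_f$ on the cost of honest $\eval$: one must argue, using the output specification in Definition~\ref{def::REMS} together with the optimality of $\eval$ established in Claim~\ref{claim::evalOptimal}, that the indicator coordinates of $y$ cannot be produced without computing each $f(x)$ at cost $D_f$, so no algorithm can shortcut below $\sum D_f$. The second is that $D(\verify, ek, x, y)$ is only meaningful when a valid $\pi$ exists for $(ek, x, y)$; one must therefore instantiate the contradiction on an $(ek, x, y)$ witnessing the bound and argue that the attacker's amortized attempt cost over random $x$ still dominates the honest cost, which follows because the hypothesis is stated universally and because soundness forces the same underlying success probability $p$ for both strategies.
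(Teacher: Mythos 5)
Your proposal is correct and follows essentially the same route as the paper: assume some $(ek,x,y)$ admits a proof-generation algorithm cheaper than $\sum_{\left<f,y\right>\in\objectives} D_f$, note that honest mining cannot cost less than that sum because $\eval$ must evaluate every objective, and derive a contradiction from the resulting disproportionate block-creation advantage. The only difference is cosmetic: the paper concludes by contradicting the optimality of $\eval$ (Claim~\ref{claim::evalOptimal}), whereas you inline that claim's underlying rate argument and contradict Definition~\ref{def::Security} directly, with somewhat more explicit bookkeeping of per-attempt cost and success probability than the paper itself provides.
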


\begin{proof}
    Assume towards a contradiction that $\sum_{\left<f,y\right> \in \objectives} D_f > D(\verify,ek,x,y)$ for some $x,y \in \{0,1\}^{256}$. Then an adversary may choose to invest resources into his algorithm, $\eval'$ which is inverting $\verify$ since it is easier. But from the energy efficiency, the computational resources that are needed to compute $\eval$ are at least $\sum_{\left<f,y\right> \in \objectives} D_f$ (solving the mining objectives). I.e. The attacker found an algorithm which is more efficient than $\eval$ to produce proofs, contradiction to Claim~\ref{claim::evalOptimal}.
\end{proof}

From the above three claims we conclude that Claim~\ref{claim::energyEfficientProofsNecessarry} holds.

\begin{claim*}[\ref{claim::evalOptimalProver}]
    Let $\mathcal{S}=( \objectives, \setup, \eval, \verify)$ be a \textbf{secure and energy efficient} REMS. Then $\eval$ is the optimal algorithm for generating the proof $\pi$ given setup $ek$.
\end{claim*}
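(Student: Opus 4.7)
The plan is to argue by contradiction, in close parallel to the proof of Claim~\ref{claim::evalOptimal}, using the fact that in an energy-efficient REMS generating a valid proof $\pi$ is essentially the same work as making a mining attempt (by Claim~\ref{claim::energyEfficientProofsNecessarry}), and that by Claim~\ref{claim::invert_verify_is_harder_than_eval} one cannot cheat $\verify$ by generating $\pi$ without doing the underlying work $\sum_{\left<f,y\right>\in\objectives} D_f$.

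First, I would suppose toward a contradiction that there exists an algorithm $A$ which, given $(\objectives, ek, x)$, outputs a pair $(y, \pi)$ such that $\verify(\objectives, ek, x, y, \pi) = \mathbb{T}$, and does so using strictly fewer computational resources than $\eval$ by some factor $\alpha > 1$. I would then let an attacker replace $\eval$ with $A$ in the mining loop, while the honest network continues to execute $\eval$. Since by soundness each accepted $(x, y, \pi)$ triple carries exactly the same information content that a genuine $\eval$ output would, each invocation of $A$ on a fresh $x$ is functionally a mining attempt. Thus in a fixed time window the attacker performs $\alpha$ times more mining attempts per unit of her own physical resources than an honest miner would.

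Next, I would plug this into the security definition (Definition~\ref{def::Security}): the probability that the attacker mines a block equals her share of executions across the network. Replacing her $\eval$-executions with $\alpha$-many-times-faster $A$-executions increases her numerator without increasing the corresponding honest denominator, so her expected share of blocks strictly exceeds her share of physical computational power, contradicting security exactly as in Claim~\ref{claim::evalOptimal}.

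The main obstacle I anticipate is ruling out ``exotic'' proof generators $A$ that do not mimic $\eval$ at all — for instance, algorithms that simply guess $\pi$ hoping to fool $\verify$, or that invert $\verify$ directly rather than solving the underlying mining objectives. Here I would invoke Claim~\ref{claim::invert_verify_is_harder_than_eval}, which already established $\sum_{\left<f,y\right> \in \objectives} D_f \le D(\verify, ek, x, y)$; combined with $\epsilon$-energy efficiency (which forces $\eval$'s cost to be arbitrarily close to $\sum D_f$), this shows that any such shortcut generator would itself beat $\eval$'s lower bound and hence replay the same contradiction. So all alternatives $A$, whether faithful or adversarial, collapse into the same security-breaking scenario, establishing that $\eval$ must already be the optimal proof-generating algorithm given $ek$.
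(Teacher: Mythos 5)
Your proposal is correct and takes essentially the same approach as the paper: a contradiction argument in which a hypothetically faster generator of verifiable proofs lets an adversary gain block-creation power disproportionate to her resources, violating the security/optimality guarantee. The only difference is presentational — the paper simply notes that such an $\eval'$ would contradict the optimality of $\eval$ established in Claim~\ref{claim::evalOptimal}, whereas you inline that security argument from Definition~\ref{def::Security} and explicitly dispatch ``exotic'' proof generators via Claim~\ref{claim::invert_verify_is_harder_than_eval}, which is a harmless (and slightly more careful) elaboration of the same idea.
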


\begin{proof}\label{proof::evalOptimalProver}
    The system is energy efficient, therefore $\eval$ compute possible solutions to the mining objectives in $\objectives$ and produce proofs.
    
    Assume toward contradiction that there exists more efficient algorithm $\eval' \neq \eval$ s.t. $\eval'$ that can generate valid proofs. If an adversary has access to $\eval'$ then he\textbackslash she can divert computational power away from solving mining objectives in $\objectives$ (since she can generate a proof $\pi$ without trying to solve the mining objectives --- unlike the other miners). This means that $\eval$ is not optimal, in contradiction to Claim~\ref{claim::evalOptimal}.
\end{proof}

\begin{claim*}[\ref{claim::findProofMeansSolveObjecives}] 
     Let $\mathcal{S}=( \objectives, \setup, \eval, \verify)$ be a 
     
     \textbf{secure-energy-efficient} REMS.
    Let $\objectives = \{\left<f_1,z_1,\ldots,\left<f_n,z_n\right>\right>\}$ then for any
    
    $\left<f_i,y_i\right>,\left<f_j,y_j\right>\in\objectives$ 
    
    and for all $x=x_1x_2\ldots x_n \in\{0,1\}^{256}$, for any algorithm used to generate proofs it holds that:
    \begin{align*}
        &Pr\bigg(\textrm{Find $y,\pi$ s.t.: } \verify(\objectives, ek,x,y,\pi) =\mathbb{T} |
        \eval(\objectives,ek,x)_{\big<f_i, z_i\big>} = \mathbb{T}\bigg)  = 
    \\
        \frac{\comp{\left<f_i,z_i\right>}}{\comp{\left<f_j,z_j\right>}} \cdot &Pr\bigg(\textrm{Find $y,\pi$ s.t.: } \verify(\objectives, ek,x,y,\pi) = \mathbb{T}|
        \eval(\objectives,ek,x)_{\big<f_j, z_j\big>} = \mathbb{T}\bigg)
    \end{align*}
    Where $y$ is the output of $\eval(\objectives,ek,x)$.
\end{claim*}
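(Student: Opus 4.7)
The plan is to lift Claim~\ref{claim::weightsOfMiningObjectives} from the ``block-mining'' regime into the ``proof-generation'' regime that Claims~\ref{claim::energyEfficientProofsNecessarry} and \ref{claim::evalOptimalProver} have set up. In an energy-efficient REMS, a successful mine is essentially equivalent to producing a valid proof $\pi$: by Claim~\ref{claim::energyEfficientProofsNecessarry} the only way $\verify$ returns $\mathbb{T}$ is through proofs that witness attempts at solving the mining objectives in $\objectives$, and by Claim~\ref{claim::evalOptimalProver} the honest algorithm $\eval$ is the optimal proof-producer. Thus the conditional probability of finding a valid pair $(y,\pi)$ given that a particular objective $\langle f_i, z_i\rangle$ was solved should play exactly the same structural role in the security analysis as the probability of mining a block did in Claim~\ref{claim::weightsOfMiningObjectives}, and the same ``proportional to $C_{\langle f,y\rangle}$'' law should therefore govern it.

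Concretely, I would argue by contradiction. Suppose there exist $\langle f_i, z_i\rangle, \langle f_j, z_j\rangle \in \objectives$ and some proof-producing algorithm $A$ for which the stated ratio fails; w.l.o.g.\ the left-hand side strictly exceeds $\comp{i}/\comp{j}$ times the right-hand side. I would then construct an attacker $\mathcal{A}^*$ that diverts effort away from the baseline $\eval$ and instead invests its computational budget so as to maximize the chance of triggering the event that objective $\langle f_i, z_i\rangle$ is solved, and only then invokes $A$ to produce $(y,\pi)$. Because $A$'s conditional success rate on the $i$-branch is strictly more than $\comp{i}/\comp{j}$ times its rate on the $j$-branch, while the cost of attempting objective $i$ relative to objective $j$ is precisely $\comp{i}/\comp{j}$ by Definition~\ref{def::Df_Cf}, a direct expectation calculation shows that $\mathcal{A}^*$ generates valid proofs at a rate strictly larger than its share of the total network's computational work.

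This rate advantage directly contradicts the optimality of $\eval$ as a proof-producer (Claim~\ref{claim::evalOptimalProver}) and, via the equivalence between proof-finding and mining in the energy-efficient setting, contradicts security (Definition~\ref{def::Security}). The main obstacle I anticipate is the careful bookkeeping of the conditional-probability-to-work ratio: one must verify that when $\mathcal{A}^*$ biases its input distribution over $\{0,1\}^D$ (recall that by Definition~\ref{def::Security} the sampler has no access to $ek$), the marginal conditional probability of solving objective $\langle f_i, z_i\rangle$ and the expected per-attempt work both shift in a way that genuinely produces the claimed efficiency gain, rather than being absorbed by a compensating change in the unconditional success rate. Settling this is essentially the direct analogue of the accounting step already carried out in the proof of Claim~\ref{claim::weightsOfMiningObjectives}, and I expect the same argument to port over once the per-branch probabilities are written out explicitly.
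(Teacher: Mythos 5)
Your proposal follows essentially the same route as the paper's proof: argue by contradiction that a disproportionate conditional advantage for one objective lets an attacker divert resources toward solving that objective and then producing the proof, beating the honest $\eval$ and contradicting its optimality (equivalently, energy efficiency and security). The paper's own argument is in fact a one-sentence version of exactly this reduction, so your additional bookkeeping concerns go beyond, rather than diverge from, what the paper actually carries out.
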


\begin{proof}\label{proof::findProofMeansSolveObjecives}
    Assume towards a contradiction that there is some $\left<f,z\right>\in\objectives$ such that knowing the solution to $\left<f,z\right>$ increases the odds of finding $\pi$ to satisfy $\verify$ by more than the relative difficulty of solving $\left<f,z\right>$. Then an attacker can choose to invest resources in solving $\left<f,z\right>$, and then in finding such $\pi$ instead of solving all other mining objectives in $\objectives\setminus \{\left<f,z\right>\}$ which contradicts the fact that that $\mathcal{S}$ is an energy-efficient REMS.
\end{proof}

\section{Implementation of Secure Energy Efficient User Uploaded REMS}\label{app::implementation}

In this section we formally define our suggested protocol. Our model strongly corresponds to the original Bitcoin protocol, and as such, any property that has not been specifically mentioned can be assumed to be untouched and remain loyal to the Bitcoin protocol. Our construction does deviate from the Bitcoin protocol in some aspects, namely the block creation rule and a new type of transaction.

\subsection{Bitcoin Mining Protocol}
In Bitcoin, blocks are created in the following way: each miner guesses random strings. For each string $r$, the miner calculates a binary string $SHA2(SHA2(H_{ek}\circ r))$, Where $H_{ek}$ is the header of the block that the miner tries to mine, which contains the previous block hash (which is unpredictable, thus can be considered as $ek$), its address, a commitment to the transactions the block contains and more. A miner gets to mine a new block if it holds that the output of the computation is smaller than some global parameter $D$ (which is referred to as the difficulty parameter). This process is what is called "Bitcoin Mining". 

\subsection{Modified Protocol - A High level Description}
We begin with a high-level description of our \textbf{secure-energy-efficient-user-uploaded REMS}  and in the following sections we formally describe how everything is realized and prove  correctness.

In order to describe the protocol more simply, we consider only one type of problems - trimmed output of $SHA256^m$ for $m \in \mathbb{N}$. We describe how this protocol can be generalized in section \ref{sec::generalize_problem_types}.

In our system, users can upload mining objectives to the system using a new type of transaction. The transaction holds the prize that the user offers in exchange to a solution to his mining objective. The mining objective itself is an output of a $SHA256$ to which they need the corresponding input. 

Uploaded mining objectives are partitioned into ''active" and ''non-active" mining objectives, according to whether they were solved. The mining process works as follows: Miners choose a subset $S$ of a fixed size of the active mining objectives. After committing to this subset, they start guessing binary strings and if they succeed a block is mined.

The commitment is done by writing a Merkle tree of the set $S$ in the header of the block being attempted (similar to what happens in the coin-base transaction in Bitcoin today).

Our solution ensures that a miner will work on all mining objectives they committed to by pipelining the (untrimmed) result of one mining objective as the input to the next. Only the output of the last mining objective in $S$ might generate a block.

The miners, as in Bitcoin, generate a seed that is concatenated to the block header that they try to mine, and use the result as the input to the first mining objective. If a miner finds a solution to one of the mining objectives while attempting to mine a block, she will publish the seed as a new transaction, and collect the prize to the mining objective. A block was mined only if the final output meets the difficulty requirements. 

Let us formally define a class of mining objectives, that we use later to describe the mining objectives that our system will be able to solve.
\begin{definition} SHA256 trimmed output mining objective
    is a mining objective with the form $(m, f, t, y)$ where $m, f, t \in \mathbb{N}, s \in \{0,1\}^*$. In this mining objective, the goal is to find a $x \in \{0,1\}^*$ such that $SHA256(x)^m[f:t] = y$, where $[f:t]$ means to take the bits from index $f$ to index $t$.
\end{definition}
An example mining objective is upper bounding SHA256's output by demanding that the first $D$ bits should equals $00\cdots0$ (which is close to Bitcoin's mining objective for $m=1, f=0, t=D$).

\subsection{Modified Protocol - Formal Description}
The system $\system = ( \objectives,\setup, \eval, \verify)$ contains the following elements:

\subsubsection{Mining Objectives in $\objectives$}\label{oursys::objectives}
User-uploaded mining objectives must be of the form $\left<f,y\right>$ where $f$ is a function and $y$ is a well formatted output of $f$. We define $\objectives$  be limited to contain only trimmed outputs of $SHA256^m$ for $m \in \mathbb{N}$. We later discuss how this family can be expanded slightly, while still keeping in line with the results from the previous section.

Users may upload mining objectives through a new special type of transaction, which will contain a description of the mining objective, alongside a deposit which can be withdrawn trough supplying a valid solution to the mining objective\footnote{The solution has to be well formatted in the sense that it has to contain the header information of the relevant block at the time of solution. This is important in order to proof compliance with the requirement of unpredictability describes in the firs section.}. When a miner finds a solution to a mining objective, she can publish a transaction with the solution. The solution contains the solver's public key within it as the recipient of the deposited prize. So in order to hijack the solution a miner must be able to find collisions in SHA256. A schematic illustration of this mechanism can be found in Figure~\ref{figure::block_creation_mechanism}.

In addition we limit $\objectives$ to be the set of ``Active mining objectives'', defined in the follwoing way: 
Given a user-uploaded mining objective, it is considered \textbf{active} as long as it complies with the two following requirements:
\begin{itemize}
    \item It has not yet been solved in previous blocks.
    \item It's solution isn't a part of the block's transactions.
\end{itemize}

\subsubsection{$\setup()\rightarrow ek$}
$\setup()\rightarrow ek$ is simply the hash (SHA256) of the latest block header in the system (the last leaf on the longest chain).

\subsubsection{ $\eval(\objectives, ek, x) \rightarrow (y, \pi)$}
Given a set of the active mining objectives: $\objectives = \left<f_1,y_1\right>,\ldots,\left<f_{|\objectives|},y_{|\objectives|}\right>$.
A miner first chooses a subset $S \subseteq \objectives$ of size k. The miner will query $\eval$ using the input parameters $S, ek, x$ \footnote{If there aren't enough mining objectives in $\objectives$, then the miner most add mining objectives to $S$ from a list of mining objectives accepted by the system}. The miner will then calculate $s_0 = H$, where $H$ holds the information on the block that he's trying to mine (including the previous block hash $ek$, his identity, Merkle root of $S$ and Merkle root of the transactions that he includes in the block) 
and then for every active user-uploaded mining objective, the miner checks if the assignment $f(s_{i-1})=y_i$. If not, the miner sets $s_i = f_i(s_{i-1})$ and keeps going. \footnote{If the trimming of $f_{i-1}$ is to strong, then this may degenrate the outputs space of $f_i(s_{i-1})$- this is why we will always use the output before the trimming.}. If at any point $l\in[{|S|}]$ it holds that $f(s_{l-1})=y_l$, the miner may publish a transaction with the proof\footnote{The proof will be $s_0$, and this way the miner is safe from anyone hijacking the solution, since $B$ contains the miners' information} and collect the fee offered by the mining objective-uploader. 
Finally, a block is mined if $s_{|S|} \le D$ (where $D$ is the difficulty parameter). A schematic illustration of this mechanism can be found in Figure~\ref{figure::mining_mechanism}.
The output $y$ is generated in the following way:
\begin{itemize}
    \item $y[0] = \mathbb{T}$ if  $s_{|\objectives|} \le D$ (where $D$ is the difficulty parameter on the system). Otherwise, $y[0] = \mathbb{F}$.
    \item For all $i\in[|\objectives|]$, $y[i] = \mathbb{T}$ if it holds that $f(s_{i-1})=y_i$. Otherwise,  $y[i] = \mathbb{F}$. 
\end{itemize}

The proof $\pi$ is simply the output of the last objective. I.e. $\pi = f_{|\objectives|}(s_{|\objectives|-1})$ 

A very important comment is that $k$ should be equal to the number of different functions (before the trimming) that are in $S$. If $S$ contains multiple mining objectives that are a different trimming of the same output, then the miner needs to calculate the output once, and check all of the possible $y_i$'s against this output.
We assume the overhead of these checks, given the output of the function, is negligible (and can be computed in parallel) when compared to the execution  of the function\footnote{We can remove this assumption and allow only a single occurrence of each $f$ in $S$.}.

Otherwise $\eval$ is not the optimal algorithm for solving the group of the mining objectives (altogether), which is a contradiction to the definition of energy efficient.

\begin{figure}
 \centering 
 \includegraphics[width=0.4\textwidth]{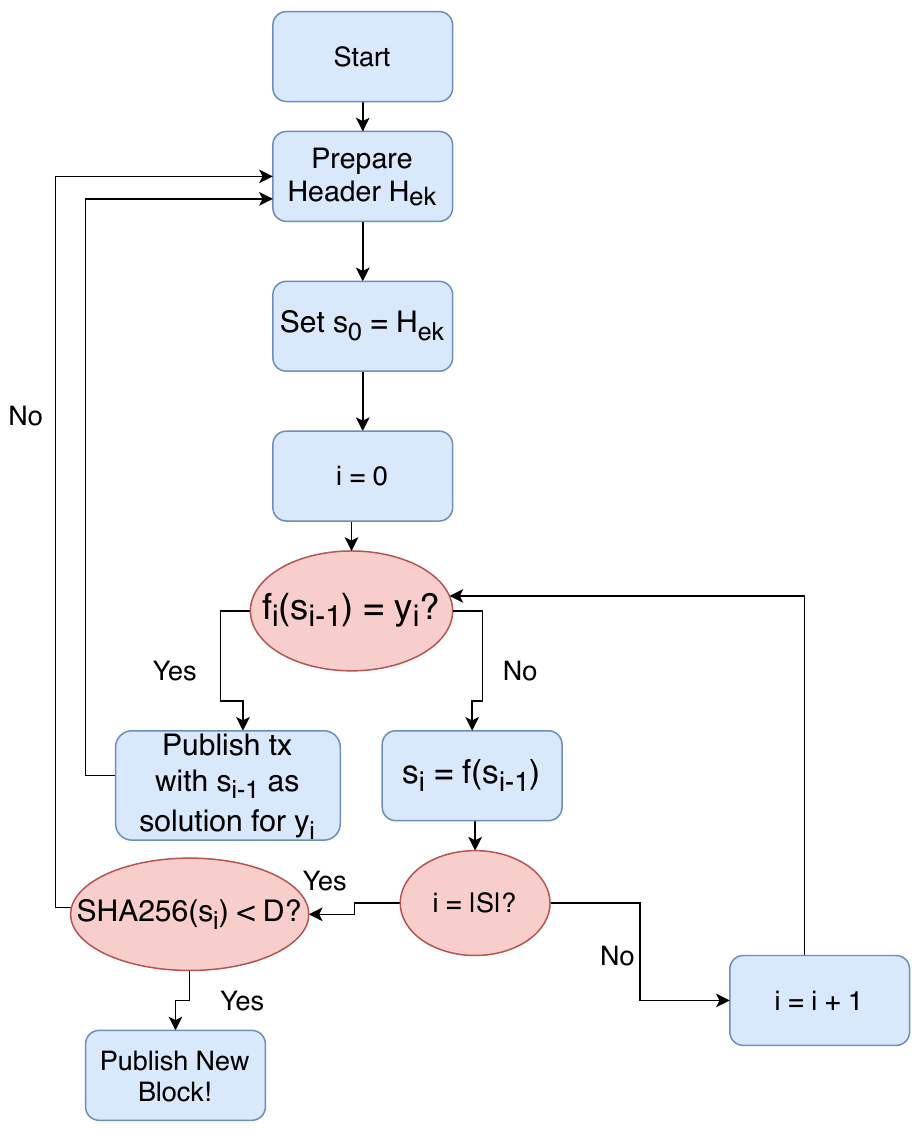}
 \caption{Modified block creation mechanism to allow solving user uploaded preimage resistance functions as part of the mining process.}
 \label{figure::block_creation_mechanism}
\end{figure}

\subsubsection{ $\verify(\objectives, ek, x, y, \pi)\rightarrow \{ \mathbb{T}, \mathbb{F} \}$}
$\verify$ works as expected: Checks whether $\pi$ equals $f_{|S|}(s_{|S|-1})$, and that $y[0] = \mathbb{T}$ (i.e. $s_{|S|} \le D$). 

\subsubsection{Prize Collection}
If a miner finds a solution to a mining objective, she publishes a special transaction with the solution. Since the seed to the solution contains the public key of the miner, everyone can verify that the solution is correct and that she is the legal recipient of the prize. Miners are incentivized to include this transaction in their newly created block because of the transaction's fee, just like any other transaction.

\subsection{Liveness}

\begin{figure*}
 \centering 
 \includegraphics[width=0.8\textwidth]{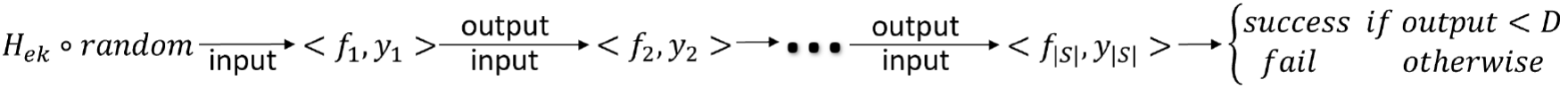}
 \caption{Our modified mining mechanism.}
 \label{figure::mining_mechanism}
\end{figure*}

We notice that the probability of mining a block is the same compared to the Bitcoin protocol, so our guarantees of liveness stems from that of the original Bitcoin protocol. Meaning that the system continues to create blocks even if no mining objectives are uploaded by users (though in this case there does exist the same waste of energy as in Bitcoin).

Moreover, we keep the incentives for the honest miners to keep mining blocks because they still get the block reward and the transactions fees. Therefore any honest transaction will eventually end up deep enough in an honest chain, assuming there's an honest majority.

The only thing to make sure is that the new special transactions will be included. We claim that they will, since hijacking the solution is computationally infeasible (for an adversary whom cannot reverse preimage resistant functions). So miners gain nothing be ignoring such transactions, and are incentivised to include them via transaction fees.


\subsection{Generalizing and Restricting the Family of allowed mining objectives} \label{sec::generalize_problem_types}

In subsection~\ref{oursys::objectives} We limited the discussion to trimmed $SHA256^m$ problems to simplify the model. This section will discuss a possible generalization, in which we broaden the allowed set of mining objectives. We show that although we can generalize our protocol and keep it secure, the allowed mining objectives and protocol still must have restrictions.

From Claim~\ref{claim::weightsOfMiningObjectives} and the fact that mining a block and solving a mining objective is independent, we need to enforce that each attempt to mine a block has the same computational demands. Each attempt to mine a block is actually executing all the mining objectives in the chosen group of mining objectives (that was denoted by $S$), therefore we have to constraint the possible groups. 

One option to do so is by defining a "score" to each type of mining objective. Then, the protocol can enforce ``fixed score" for $S$ and thus control the computational resources of each execution of $\eval$ (each attempt to mine a block). Note that this implicitly holds in the above suggestion because we demanded a ``fixed size" if $S$, and there is only a single type of allowed mining objective.

We offer the following example to guide the readers intuition - Suppose the system allowed two families of functions as mining objectives -  trimmed outputs of SHA$256$ and trimmed outputs of MD$5$ (instead of allowing only trimmed SHA$256$). $\eval$ will be designed as follows - always run SHA$256$ for $k_1$ times and then MD$5$ for $k_2$. Such a system meets all the formal requirements for security, user uploaded mining objectives and energy efficiency (since these are questions uploaded by users whom are willing to pay for the output of the computation). This can be further extended to include other one-way functions, as long as their proportions remain controlled.

\end{document}